\theoremstyle{plain}
\newtheorem{Th}{Theorem}[section]
\newtheorem{Lem}[Th]{Lemma}
\newtheorem{Prop}[Th]{Proposition}
\theoremstyle{plain}
\newtheorem*{Main}{Main Theorem}
\numberwithin{equation}{section}
\newcommand{\diff}[2]{\frac{\partial #1}{\partial #2}}
\newcommand{\qa}{\alpha}
\newcommand{\qb}{\beta}
\newcommand{\qd}{\delta}
\newcommand{\qg}{\gamma}
\newcommand{\qt}{\tau}
\newcommand{\qth}{\theta}
\newcommand{\qe}{\varepsilon}
\newcommand{\qz}{\zeta}
\newcommand{\qp}{\partial}
\newcommand{\Qg}{\Gamma}
\newcommand{\ql}{\lambda}
\newcommand{\Qd}{\Delta}
\newcommand{\Qo}{\Omega}
\newcommand{\kk}[1]{\left(#1\right)}
\newcommand{\fk}[2]{\left[#1, #2\right]}
\begin{document}

\title{Linearization of Virasoro symmetries associated with semisimple Frobenius manifolds}

\author{Si-Qi Liu, Zhe Wang, Youjin Zhang}
\keywords{Frobenius manifold, Principal Hierarchy, Bihamiltonian structure, Virasoro symmetry, Loop equation}

\begin{abstract}

For any semisimple Frobenius manifold, we prove that a tau-symmetric bihamiltonian deformation of its Principal Hierarchy admits an infinite family of linearizable Virasoro symmetries if and only if all the central invariants of the corresponding deformation of the bihamiltonian structure are equal to $\frac{1}{24}$. As an important application of this result, we prove that the Dubrovin-Zhang hierarchy associated with the semisimple Frobenius manifold possesses a bihamiltonian structure which can be represented in terms of differential polynomials.
\end{abstract}

\date{\today}

\maketitle
\tableofcontents


\section{Introduction}\label{intro}
The deep relationship between  2D topological field theories (2DTFT) and integrable hierarchies has become one of the central research topics in mathematical physics since the proof of the Witten conjecture\cite{witten1990two} by  Kontsevich \cite{kontsevich1992intersection}, which relates the intersection numbers on the moduli space of stable curves to the well-known Korteweg-de Vries (KdV) hierarchy. In \cite{dubrovin1996geometry}, Dubrovin introduced the notion of Frobenius manifolds as a geometric interpretation of the genus zero part of a 2DTFT, and showed that the genus zero partition function of a 2DTFT is the tau-function of a particular solution of a bihamiltonian integrable hierarchy, which is called the Principal Hierarchy of the Frobenius manifold corresponding to the 2DTFT.

In \cite{dubrovin2001normal} Dubrovin and the third-named author constructed a certain deformation of the Principal Hierarchy associated with a semisimple Frobenius manifold and conjectured that the full genera partition function of the 2DTFT corresponding to the Frobenius manifold is the tau-function of a solution of the deformed integrable hierarchy. This particular deformed  integrable hierarchy is called the topological deformation of the Principal Hierarchy and it is also called the Dubrovin-Zhang (DZ) hierarchy in the literature. It is constructed by performing a quasi-Miura transformation to the Principal Hierarchy and consists of bihamiltonian evolutionary PDEs. In \cite{buryak2012deformations,buryak2012polynomial}, Buryak, Posthuma and Shadrin proved that the DZ hierarchy and its  first Hamiltonian structure can be represented by differential polynomials. However, the polynomiality of the second Hamiltonian structure of the DZ hierarchy remains unproved. In \cite{iglesias2021bi}, Hern{\'a}ndez Iglesias and Shadrin provided some evidences which support the validity of the polynomiality of the second Hamiltonian structure by using the geometry of the moduli space of the stable curves.

%

In this paper, we prove the polynomiality of the second Hamiltonian structure of the DZ hierarchy.The approach of our proof consists of the following two steps:

\textbf{Step 1.} To prove that any polynomial tau-symmetric bihamiltonian deformation of the Principal Hierarchy possesses an infinite family of Virasoro symmetries.

\textbf{Step 2.} To prove that the Virasoro symmetries constructed in Step 1. can be linearized (in a sense that will be explained later) if the central invariants of the bihamiltonian structure of the deformed integrable hierarchy are all equal to $\frac{1}{24}$.

In \cite{dubrovin2001normal}, it is proved that the integrable hierarchy obtained in these two steps is unique and it is exactly the DZ hierarchy. This implies that the bihamiltonian structure of the DZ hierarchy is polynomial. We have finished the first step in \cite{variationalII,liu2020super,liu2021variational} by developing the theory of super tau-covers of bihamiltonian integrable hierarchies and variational bihamiltonian cohomologies. In particular we have the following theorem. 
\begin{Th}[\cite{variationalII}]
\label{AT}
For the Principal Hierarchy associated with a semisimple Frobenius manifold and any of its tau-symmetric bihamiltonian deformations, there exists a unique deformation of its Virasoro symmetries such that they are symmetries of the deformed integrable hierarchy. Moreover, the action of the Virasoro symmetries on the tau-function $Z$ of the deformed integrable hierarchy can be represented in the form 
\begin{equation*}
\diff{Z}{s_m} = L_mZ+O_mZ,\quad m\geq -1,
\end{equation*}
where $L_m$ are the Virasoro operators constructed in \cite{dubrovin1999frobenius} and $O_m$ are some differential polynomials, and the flows $\diff{}{s_m}$ satisfy the Virasoro commutation relations
\[
\fk{\diff{}{s_k}}{\diff{}{ s_l}} = (l-k)\diff{}{ s_{k+l}},\quad k,l\geq -1.
\]
\end{Th}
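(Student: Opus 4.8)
The plan is to treat the statement as a deformation problem and to resolve it with the cohomological machinery developed in \cite{liu2020super,liu2021variational}. At the dispersionless level the Virasoro symmetries $\diff{}{s_m}$ of the Principal Hierarchy are already available from \cite{dubrovin1999frobenius}, so the genuine content is to deform each $\diff{}{s_m}$ order by order in the dispersion parameter $\vare$ in such a way that it remains a symmetry of the prescribed tau-symmetric bihamiltonian deformation, and then to identify its action on the tau-function. First I would lift the whole structure to its super tau-cover: this repackages the pencil of Poisson brackets together with the odd tau-variables into a single complex in which a symmetry of the deformed hierarchy is precisely a cocycle, so that compatibility with both Hamiltonian structures becomes one closedness condition rather than two.

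With this reformulation in hand, I would run an induction on the order in $\vare$. Assuming a consistent deformation of $\diff{}{s_m}$ has been built up to order $\vare^{k}$, the obstruction to extending it to order $\vare^{k+1}$ represents a class in a variational bihamiltonian cohomology group $\vbh$, while the indeterminacy of any such extension is measured by the adjacent group. The crux of the argument is therefore the computation of these $\vbh$ groups for a semisimple bihamiltonian structure of hydrodynamic type. Here semisimplicity is used decisively: passing to the canonical coordinates that diagonalize the metric localizes the complex, and one shows that the groups controlling both the obstruction and the ambiguity vanish in the critical bidegrees. This simultaneously yields the existence and the uniqueness of the deformed flow $\diff{}{s_m}$.

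It remains to descend to the tau-function and to check the algebra of the flows. Because the deformation is tau-symmetric, each deformed Virasoro flow lifts to an action on the tau-function $Z$; writing this action and matching it against the genus-zero operator $L_m$ order by order in $\vare$, the same cohomological vanishing that governs the lift forces the correction to take the form $O_m Z$ with each $O_m$ a differential polynomial. Finally, the Virasoro commutation relations follow from the uniqueness established above: for each $k,l\ge -1$ the commutator $\fk{\diff{}{s_k}}{\diff{}{s_l}}$ is again a symmetry of the deformed hierarchy and deforms the dispersionless flow $(l-k)\diff{}{s_{k+l}}$, so by uniqueness it must coincide with $(l-k)\diff{}{s_{k+l}}$ identically.

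I expect the principal obstacle to be exactly the vanishing of the variational bihamiltonian cohomology in the relevant bidegrees, since this is where semisimplicity must enter in an essential way and where the ordinary bihamiltonian cohomology $\bh$ of Dubrovin--Liu--Zhang is not by itself sufficient; the refinement to the variational theory on the super tau-cover is precisely what is engineered in \cite{liu2020super,liu2021variational,variationalII} to make this computation tractable.
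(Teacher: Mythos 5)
This theorem is not proved in the paper at all: it is imported verbatim from \cite{variationalII}, and your outline reproduces exactly the strategy that the paper attributes to that reference and its companions \cite{liu2020super,liu2021variational} --- lift to the super tau-cover so that compatibility with the Poisson pencil becomes a single cocycle condition, run an induction on the order in $\vare$ with existence and uniqueness controlled by the vanishing of the variational bihamiltonian cohomology groups $\vbh$ in the critical bidegrees under semisimplicity, and deduce the Virasoro commutation relations from uniqueness. Since the entire substance is concentrated in the cohomological vanishing you defer to the cited works, your text is an accurate roadmap of the same approach rather than an independent proof, and there is nothing in the present paper to check it against beyond that.
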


In this paper, we proceed to finish the second step. Let us first explain this step in a more explicit way. It is proved in \cite{dubrovin2018bihamiltonian} that the tau-structure of the deformed Principal Hierarchy changes if  we choose different representatives for the deformation of the bihamiltonian structure under Miura type transformations. In particular, the differential polynomials $O_m$ described in the above theorem change accordingly after performing a Miura type transformation to the deformed integrable hierarchy. So by the term \textit{to linearize the Virasoro symmetries} we mean to find a suitable Miura type transformation such that after performing this transformation, the differential polynomials $O_m$ become zero (see Sect.\,\ref{line} for a detailed description). We prove the following theorem.
\begin{Main}
For a given tau-symmetric bihamiltonian deformation for the Principal Hierarchy associated with a semisimple Frobenius manifold, the Virasoro symmetries of the deformed integrable hierarchy can be linearized  if and only if the central invariants of the deformed bihamiltonian structure are all equal to $\frac{1}{24}$, i.e., in this case, by choosing a suitable representative of the bihamiltonian structure, the action of the Virasoro symmetries on the tau-function $Z$ of the deformed integrable hierarchy can be represented in the form 
\begin{equation*}
\diff{Z}{s_m} = L_mZ,\quad m\geq -1.
\end{equation*}
\end{Main}
As an immediate corollary, we have the following polynomiality theorem for the DZ hierarchy.
\begin{Th}
The bihamiltonian structure of the DZ hierarchy associated with a semisimple Frobenius manifold can be represented by  differential polynomials.

\end{Th}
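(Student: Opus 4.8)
The plan is to read off the theorem from the Main Theorem by combining it with the classification of tau-symmetric bihamiltonian deformations and with the uniqueness characterization of the DZ hierarchy recalled in the introduction; essentially no new analysis should be required. First I would invoke the existence part of the classification of tau-symmetric bihamiltonian deformations of the Principal Hierarchy by their central invariants (see \cite{dubrovin2018bihamiltonian}) to produce, for the given semisimple Frobenius manifold, a \emph{polynomial} tau-symmetric bihamiltonian deformation $(\hat P_1,\hat P_2)$ whose central invariants are all equal to $\frac1{24}$. Both Hamiltonian structures of this auxiliary deformation are differential polynomials by construction, so nothing has to be proved about it; I denote the associated integrable hierarchy by $\hat H$.

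Next I would apply Theorem \ref{AT} to $\hat H$ to obtain its Virasoro symmetries in the form $\diff{Z}{s_m}=L_mZ+O_mZ$ with $O_m$ differential polynomials, and then apply the Main Theorem. Since the central invariants of $(\hat P_1,\hat P_2)$ are all equal to $\frac1{24}$, the Main Theorem provides a Miura-type transformation after which $O_m=0$, that is $\diff{Z}{s_m}=L_mZ$ for all $m\geq-1$. The decisive point here is that a Miura-type transformation is given by differential polynomials and hence sends polynomial bihamiltonian structures to polynomial bihamiltonian structures; therefore the transformed hierarchy $\tilde H$ remains polynomial, its bihamiltonian structure $(\tilde P_1,\tilde P_2)$ consists of differential polynomials, and its tau-function now satisfies the linearized Virasoro symmetries.

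Finally I would invoke the uniqueness statement of \cite{dubrovin2001normal}: the tau-symmetric bihamiltonian deformation of the Principal Hierarchy whose Virasoro symmetries act on the tau-function by $\diff{Z}{s_m}=L_mZ$ is unique and coincides with the DZ hierarchy. Consequently $\tilde H$ is exactly the DZ hierarchy and $(\tilde P_1,\tilde P_2)$ is exactly its bihamiltonian structure. Since $(\tilde P_1,\tilde P_2)$ is polynomial, the bihamiltonian structure of the DZ hierarchy is representable by differential polynomials, which is the assertion of the theorem.

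As for the difficulty, all the substantial work has already been packaged into the Main Theorem, so the corollary is mainly a matter of assembling the pieces in the right order. The one place where I would be careful is in checking that the linearizing transformation furnished by the Main Theorem is a genuine Miura-type (that is, polynomial) transformation and not a quasi-Miura transformation with rational or transcendental dependence on the jet variables, since it is precisely this feature that allows polynomiality to be transported from the auxiliary deformation to the DZ normalization. I would also note, as a consistency check, that the central invariants of the DZ hierarchy come out equal to $\frac1{24}$ automatically once the identification with $\tilde H$ is made, in agreement with the known leading-order computation for the topological deformation.
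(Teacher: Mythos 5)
Your proposal is correct and follows essentially the same route as the paper, which derives this theorem as an immediate corollary of the Main Theorem: start from a polynomial tau-symmetric bihamiltonian deformation with all central invariants equal to $\frac{1}{24}$, linearize its Virasoro symmetries by a genuine Miura-type (hence polynomiality-preserving) transformation, and identify the result with the DZ hierarchy via the uniqueness statement of Dubrovin--Zhang. Your emphasis on checking that the linearizing transformation is Miura-type rather than quasi-Miura is exactly the point the paper's Proposition \ref{BF} is designed to secure.
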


This paper is organized as follows. In Sect.\,\ref{loop}, we recall some basic facts about the Frobenius manifolds and their loop equations. In Sect.\,\ref{line} we give the prove of the main theorem. In Sect.\,\ref{conc} we make some concluding remarks.


\section{Frobenius manifold and its loop equation}\label{loop}
The notion of Frobenius manifolds \cite{dubrovin1996geometry,dubrovin2001normal} gives a coordinate-free description of the Witten-Dijkgraaf-Verlinde-Verlinde (WDVV) associativity equation \cite{dijkgraaf1991topological,witten1990structure}. An $n$-dimensional Frobenius manifold $M$ is an analytic manifold equipped with a flat (pseudo-)Riemannian metric $\eta$ and a $(1,2)$-tensor $c$ which define Frobenius algebra structures on the tangent spaces $T_pM$ for any $p\in M$ with flat unit vector field. Furthermore these data satisfy certain homogeneous conditions.  Let $v^1,\cdots, v^n$ be flat coordinates of $M$ with $\diff{}{v^1}$ being the unit vector field, then the local structure of $M$ can be described by a solution $F(v^1,\cdots, v^n)$ of the following WDVV equation:
\begin{equation*}
\qp_\qa\qp_\qb\qp_\ql F\eta^{\ql\mu}\qp_\mu\qp_\qg\qp_\qd F = \qp_\qd\qp_\qb\qp_\ql F\eta^{\ql\mu}\qp_\mu\qp_\qg\qp_\qa F,
\end{equation*}
where $(\eta^{\ql\mu}) = (\eta_{\ql\mu})^{-1}$ and $\qp_\qa = \diff{}{v^{\qa}}$. Here and henceforth the summation over repeated upper and lower Greek indices is assumed. This function $F$ is called the potential of the Frobenius manifold $M$, and it is related to the tensors $\eta$ and $c$ by
\[
\eta_{\qa\qb} = \qp_1\qp_\qa\qp_\qb F,\quad c_{\qa\qb}^\qg = \eta^{\qg\ql}\qp_\ql\qp_\qa\qp_\qb F.
\]
The potential is required to be quasi-homogeneous with respect to an Euler vector field of the form 
\begin{equation}
\label{AJ}
E = \sum_{\qa=1}^n \left(\left(1-\frac{d}{2}-\mu_\qa\right)v^\qa+r_\qa\right)\qp_\qa,
\end{equation}
i.e.,
\[ E(F)=(3-d)F+\frac12 A_{\qa\qb} v^\qa v^\qb+B_\qa v^\qa+C.\]
Here $\mu_\qa$, $r_\qa$, $A_{\qa\qb}$, $B_\qa$ and $C$ are some constants and the constant $d$ is called the charge of $M$. 

The Frobenius algebra structure on $TM$ yields a deformed flat connection
\begin{equation}
\label{AB}
\tilde\nabla_X Y = \nabla_X Y+zX\cdot Y,\quad \forall\, X,Y\in\Qg(TM),\quad z\in\mathbb C,
\end{equation}
here $\nabla$ is the Levi-Civita connection of $\eta$ and $X\cdot Y = c(X,Y)$. It can be extended to be an affine connection on $M\times \mathbb C^*$ by regarding $z$ as the coordinate of $\mathbb C^*$ and by requiring
\[
\tilde\nabla_{\qp_z}X = \qp_z X+E\cdot X-\frac 1z \mu(X),\quad \tilde\nabla_{\qp_z}\qp_z = \tilde\nabla_{X}\qp_z = 0,
\]
where $X$ is a vector field on $M\times \mathbb C^*$ whose $\qp_z$ component is zero and $\mu = \mathrm{diag}(\mu_1,\cdots,\mu_n)$ is a diagonal matrix regarding as a section of the vector bundle $\mathrm{End}(TM)$.

A system of flat coordinates $\tilde v_1,\cdots,\tilde v_n$ of the deformed flat connection can be chosen to have the following form:
\[
(\tilde v^1(v,z),\cdots,\tilde v^n(v,z)) = (\qth_1(v,z),\cdots,\qth_n(v,z))z^\mu z^R,
\]
here $R$ is a certain constant matrix. The functions $\qth_\qa(v,z)$ are analytic at $z=0$ and has the expansion $\qth_\qa(v,z) = \sum_{p\geq 0}\qth_{\qa,p}(v)z^p$. The coefficients $\qth_{\qa,p}$ satisfy the following recursion relations:
\[
\qth_{\qa,0} = \eta_{\qa\qb}v^\qb,\quad\qp_\qb\qp_\qg \qth_{\qa,p+1} = c^\ql_{\qb\qg}\qp_\ql \qth_{\qa,p},\quad p\geq 0.
\]
They also satisfy the quasi-homogeneous condition
\[
E(\qp_\qb \qth_{\qa,p}) = (p+\mu_\qa+\mu_\qb)\qp_\qb \qth_{\qa,p}+\sum_{k=1}^p(R_k)^\qg_\qa\qp_\qb \qth_{\qg,p-k}.
\]

The Principal Hierarchy associated with $M$ is defined to be the following mutually commuting flows:
\begin{equation}
\label{AK}
\diff{v^\ql}{t^{\qa,p}} = \eta^{\ql\qg}\qp_x(\qp_\qg \qth_{\qa,p+1}),\quad \qa = 1,\cdots,n,\quad p\geq 0.
\end{equation}
Since $\diff{v^\ql}{t^{1,0}} = \qp_xv^\ql$, we will identify the time variable $t^{1,0}$ with the spacial variable $x$. We will also use the notation $v^{\ql,k} = \qp_x^kv^\ql$ in what follows. The Principal Hierarchy \eqref{AK} is bihamiltonian with respect to the following Poisson pencil:
\begin{align}
\label{AM}
\{v^\qa(x),v^\qb(y)\}_1^{[0]} &= \eta^{\qa\qb}\qd'(x-y),\\
\label{AN}
\{v^\qa(x),v^\qb(y)\}_2^{[0]} &= g^{\qa\qb}(v(x))\qd'(x-y)+\Qg^{\qa\qb}_\qg v^\qg_x(x)\qd(x-y).
\end{align}
Here the functions $g^{\qa\qb}$ and $\Qg^{\qa\qb}_\qg$ are given by
\[
g^{\qa\qb} = E^\qe c^{\qa\qb}_\qe,\quad \Qg^{\qa\qb}_\qg = \left(\frac{1}{2}-\mu_\qb\right)c^{\qa\qb}_\qg
\]
with $c^{\qa\qb}_\qg = \eta^{\qa\ql}c^\qb_{\ql\qg}$. In terms of the first Hamiltonian structure, the flows of the  Principal Hierarchy can be represented as the following Hamiltonian systems:
\[
\diff{v^\ql}{t^{\qa,p}} = \{v^\ql,\int \qth_{\qa,p+1}\}_1^{[0]}.
\]
It is also tau-symmetric, i.e., the densities of the Hamiltonians satisfy the symmetry property 
\[
\diff{\qth_{\qa,p}}{t^{\qb,q}} = \diff{\qth_{\qb,q}}{t^{\qa,p}}.
\]
The tau-symmetric property of the Principal Hierarchy enables us to define a set of functions $\Qo_{\qa,p;\qb,q}^{[0]}(v)$ for $\qa,\qb = 1,\cdots,n$ and $p,q\geq 0$ such that the following identities hold true:
\begin{align*}
&\Qo^{[0]}_{\qa,p;1,0} = \qth_{\qa,p},\quad \Qo^{[0]}_{\qa,p;\qb,0} = \qp_\qb \qth_{\qa,p+1};\\
&\Qo_{\qa,p;\qb,q}^{[0]} = \Qo_{\qb,q;\qa,p}^{[0]},\quad \diff{\Qo_{\qa,p;\qb,q}^{[0]}}{t^{\ql,k}}= \diff{\Qo_{\ql,k;\qb,q}^{[0]}}{t^{\qa,p}}.
\end{align*}
Let us introduce the tau-cover of the Principal Hierarchy as follows:
\[
\diff{ \mathcal F_0}{t^{\qa,p}} = f_{\qa,p}^{[0]},\quad \diff{f_{\qb,q}^{[0]}}{t^{\qa,p}} = \Qo_{\qa,p;\qb,q}^{[0]},\quad \diff{v^\qb}{t^{\qa,p}} = \eta^{\qb\ql}\qp_x\Qo_{\ql,0;\qa,p}^{[0]}.
\]
The unknown function $ \mathcal F_0$ is called the genus zero free energy and the functions $f_{\qa,p}^{[0]}$ are called the genus zero one-point functions.

In \cite{dubrovin1999frobenius}, an infinite set of Virasoro symmetries of the tau-cover of the Principal Hierarchy was constructed. These symmetries are closely related to the differential operators
\[L_m=a_m^{\qa,p;\qb,q}\qe^2\frac{\qp^2}{\qp t^{\qa,p}\qp t^{\qb,q}}+{b}_{m;\qa,p}^{\qb,q} t^{\qa,p}\frac{\qp}{\qp t^{\qb,q}}+c_{m;\qa,p;\qb,q}\frac{1}{\qe^2}t^{\qa,p} t^{\qb,q}+\frac{1}{4}\qd_{m,0}\mathrm{tr}\left(\frac{1}{4}-\mu^2\right).
\]
where $m\geq -1$ and $a_m^{\qa,p;\qb,q}$, ${b}_{m;\qa,p}^{\qb,q}$, $c_{m;\qa,p;\qb,q}$ are some constants determined from the monodromy data of the Frobenius manifold $M$. They can be represented in the form
\begin{align}
\label{AC}
\diff{ \mathcal F_0}{s_m} &= a_m^{\qa,p;\qb,q}f_{\qa,p}^{[0]}f_{\qb,q}^{[0]}+{b}_{m;\qa,p}^{\qb,q}t^{\qa,p}f_{\qb,q}^{[0]}+c_{m;\qa,p;\qb,q} t^{\qa,p} t^{\qb,q},\\
\label{AD}
\diff{f_{\qa,p}^{[0]}}{s_m} &= \diff{}{t_{\qa,p}}\diff{ \mathcal F_0}{s_m},\quad \diff{v^\qa}{s_m} = \eta^{\qa\qb}\diff{}{t^{1,0}}\diff{f_{\qb,0}^{[0]}}{s_m}.
\end{align}
These flows satisfy the Virasoro commutation relations
\[
\fk{\diff{}{s_k}}{\diff{}{ s_l}} = (l-k)\diff{}{ s_{k+l}},\quad k,l\geq -1.
\]

In \cite{dubrovin2001normal}, a deformation of the Principal Hierarchy associated with a semisimple Frobenius manifold $M$ is constructed by requiring that the actions of the Virasoro symmetries on the tau-function 
\[
Z = \exp\biggl(\frac{1}{\qe^2}\mathcal F_0+\sum_{g\geq 1}\qe^{2g-2} F_g\bigl(v,v_x,v_{xx},\cdots,v^{(m_g)}\bigr)\biggr)
\]
of the deformed integrable hierarchy are represented by
\begin{equation}
\label{AE}
\diff{Z}{s_m} = L_m Z,\quad m\geq -1.
\end{equation}
This procedure is called the linearization of the Virasoro symmetries, and the functions $ F_g$ are called the genus $g$ free energy of the deformed integrable hierarchy, which are determined by the equations
\begin{align}
\label{AF}
\diff{\Qd F}{s_m}=&\,a_m^{\qa,p;\qb,q}\kk{2\diff{ \mathcal F_0}{t^{\qa,p}}\diff{\Qd F}{t^{\qb,q}}+\qe^2\diff{\Qd F}{t^{\qa,p}}\diff{\Qd F}{t^{\qb,q}}+\frac{\qp^2( \mathcal F_0+\qe^2\Qd F)}{\qp t^{\qa,p}\qp t^{\qb,q}}}\\
\notag
&+{b}_{m;\qa,p}^{\qb,q} t^{\qa,p}\frac{\qp\Qd F}{\qp t^{\qb,q}}+\frac{1}{4}\qd_{m,0}\mathrm{tr}\left(\frac{1}{4}-\mu^2\right),\quad m\geq -1,
\end{align}
here we denote
\[\Qd F = \sum_{g\geq 1}\qe^{2g-2} F_g\bigl(v,v_x,v_{xx},\cdots,v^{(m_g)}\bigr).
\]

Denote by $\mathcal D_m$ the following derivation:
\begin{equation}
\label{AX}
\mathcal D_m = 2a_m^{\qa,p;\qb,q}\diff{ \mathcal F_0}{t^{\qa,p}}\diff{}{t^{\qb,q}}+{b}_{m;\qa,p}^{\qb,q} t^{\qa,p}\frac{\qp}{\qp t^{\qb,q}}-\diff{}{s_m},\quad m\geq -1,
\end{equation}
then the equations in \eqref{AF} can be rewritten as
\begin{align}
\label{AG}
\mathcal D_m\Qd F =&\, -a_m^{\qa,p;\qb,q}\frac{\qp^2 \mathcal F_0}{\qp t^{\qa,p}\qp t^{\qb,q}}-\qe^2a_m^{\qa,p;\qb,q}\kk{\diff{\Qd F}{t^{\qa,p}}\diff{\Qd F}{t^{\qb,q}}+\frac{\qp^2\Qd F}{\qp t^{\qa,p}\qp t^{\qb,q}}}\\
\notag
&-\frac{1}{4}\qd_{m,0}\mathrm{tr}\left(\frac{1}{4}-\mu^2\right),\quad m\geq -1.
\end{align}
By comparing the coefficients of $\qe^{2g-2}$ for $g\geq 1$, we see that we can represent the above equations as
\begin{align*}
\mathcal D_m F_1 =&\, -a_m^{\qa,p;\qb,q}\frac{\qp^2 \mathcal F_0}{\qp t^{\qa,p}t^{\qb,q}}-\frac{1}{4}\qd_{m,0}\mathrm{tr}\left(\frac{1}{4}-\mu^2\right),\\
\mathcal D_m F_g=&\, -a_m^{\qa,p;\qb,q}\kk{\sum_{k=1}^{g-1}\diff{ F_k}{t^{\qa,p}}\diff{ F_{g-k}}{t^{\qb,q}}+\frac{\qp^2 F_{g-1}}{\qp t^{\qa,p}\qp t^{\qb,q}}},\quad g\geq 2.
\end{align*}

To solve equations \eqref{AG} for all $m\geq -1$, the authors of \cite{dubrovin2001normal} reorganized them into the form
\begin{align}
\label{AH}
\mathcal D(\ql)\Qd F =&\, -\sum_{m\geq -1}\frac{1}{\ql^{m+2}}a_m^{\qa,p;\qb,q}\kk{\frac{\qp^2 \mathcal F_0}{\qp t^{\qa,p}\qp t^{\qb,q}}+\qe^2\diff{\Qd F}{t^{\qa,p}}\diff{\Qd F}{t^{\qb,q}}+\qe^2\frac{\qp^2\Qd F}{\qp t^{\qa,p}\qp t^{\qb,q}}}\\
\notag
&-\frac{1}{4\ql^2}\mathrm{tr}\left(\frac{1}{4}-\mu^2\right),
\end{align}
where $\mathcal D(\ql) = \sum_{m\geq -1}\mathcal D_m/\ql^{m+2}$. An important observation made in \cite{dubrovin2001normal} is that the equation \eqref{AH} can be rewritten as an equation for $\Qd F$ on the jet space, which has the following explicit form (Lemma 4.2.6 of \cite{dubrovin2001normal}):
\begin{align}
\label{AI}
&\sum_{r\geq 0}\diff{\Qd F}{v^{\qg,r}}\qp_x^r\kk{\frac{1}{E-\ql}}^\qg+\sum_{r\geq 1}\diff{\Qd F}{v^{\qg,r}}\sum_{k=1}^r\binom{r}{k}\qp_x^{k-1}\qp_1p_\qa G^{\qa\qb}\qp_x^{r-k+1}\qp^\qg p_\qb\\
\notag
=&\,\frac 12 (\qp_\ql p_\qa)*(\qp_\ql p_\qb)G^{\qa\qb}-\frac{1}{4\ql^2}\mathrm{tr}\left(\frac{1}{4}-\mu^2\right)\\
\notag
&+\frac{\qe^2}{2}\sum\kk{\diff{\Qd F}{v^{\qg,k}}\diff{\Qd F}{v^{\rho,l}}+\frac{\qp^2\Qd F}{\qp v^{\qg,k}\qp v^{\rho,l}}}\qp_x^{k+1}\qp^\qg p_\qa G^{\qa\qb}\qp_x^{l+1}\qp^\rho p_\qb\\
\notag
&+\frac{\qe^2}{2}\sum\diff{\Qd F}{v^{\qg,k}}\qp_x^{k+1}\left[\nabla\diff{p_\qa}{\ql}\cdot \nabla\diff{p_\qb}{\ql}\cdot v_x\right]^\qg G^{\qa\qb},
\end{align}
here $E$ is the Euler vector field \eqref{AJ}, $p_\qa = p_\qa(v;\ql)$ are the periods of the Frobenius manifold $M$,  $G^{\qa\qb}$ is a constant matrix and $*$ is a multiplication defined on the space of conserved quantities of the Principal Hierarchy. One may refer to Sect.\,4 of \cite{dubrovin2001normal} for details. The equation \eqref{AI} is called the loop equation of the Frobenius manifold $M$ and it plays a central role in the construction of the DZ hierarchy.

It is proved that the loop equation \eqref{AI} has a unique solution up to addition of constants if the Frobenius manifold $M$ is semisimple. Recall that a Frobenius manifold $M$ is called semisimple if the Frobenius algebra structure on $T_pM$ is semisimple for generic $p\in M$. For a semisimple Frobenius manifold, there exists local coordinates $u^1,\cdots u^n$ such that $\diff{}{u^1},\cdots,\diff{}{u^n}$ form a basis of idempotents of the Frobenius algebra on $TM$, i.e.,
\[
c\left(\diff{}{u^i},\diff{}{u^j}\right) = \qd_{i,j}\diff{}{u^i},\quad i,j = 1,\cdots,n.
\]
Such coordinates are unique up to permutations and are called the canonical coordinates of $M$. The following theorem is proved in \cite{dubrovin2001normal}.
\begin{Th}[\cite{dubrovin2001normal}]
\label{AL}
If the Frobenius manifold $M$ is semisimple, then the loop equation \eqref{AI} has a unique, up to the addition of constants, solution
\[
\Qd F = \sum_{g\geq 1}\qe^{2g-2} F_g\bigl(v,v_x,v_{xx},\cdots,v^{(3g-2)}\bigr).
\]
In particular we have
\begin{align*}
 F_1 &= \log\frac{\qt_I(u)}{J^{1/24}(u)}+\frac{1}{24}\sum_{i=1}^n\log u^i_x,\\
 F_g&\in C^\infty(v)\left[v_x,v_{xx},\cdots,v^{(3g-2)}\right]\left[\frac{1}{u^1_x\cdots u^n_x}\right],\quad g\geq 2,
\end{align*}
where $u^1,\cdots,u^n$ are canonical coordinates of $M$, $\qt_I(u)$ is the isomonodromic tau-function of $M$ and $J(u)$ is the Jacobian $\det\kk{\diff{v^\qa}{u^i}}$.
\end{Th}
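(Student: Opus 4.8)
The plan is to expand the unknown $\Qd F$ as a power series in $\qe^2$ and to solve the loop equation \eqref{AI} genus by genus. Substituting $\Qd F=\sum_{g\geq1}\qe^{2g-2}F_g$ and collecting powers of $\qe$ turns \eqref{AI} into an infinite recursive system: the coefficient of $\qe^0$ is an inhomogeneous linear equation for $F_1$, and for each $g\geq2$ the coefficient of $\qe^{2g-2}$ is a linear equation for $F_g$ whose inhomogeneous part is assembled from $F_1,\dots,F_{g-1}$ and their jet derivatives alone. In every genus the operator acting on the unknown is the same, namely the genus zero part of the loop operator on the left hand side of \eqref{AI}. Thus the argument reduces to inverting this linear operator modulo additive constants and to tracking the polynomial dependence of the solution on the jet variables. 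Throughout I would work in the canonical coordinates $u^1,\dots,u^n$, in which the Euler field, the multiplication $*$, and the periods $p_\qa(v;\ql)$ diagonalize and the symbol $\kk{\frac{1}{E-\ql}}$ together with the contractions against $G^{\qa\qb}$ become explicit.

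The heart of the matter is an invertibility statement: reading \eqref{AI} as a family of identities indexed by $\ql$, one can recover every first derivative $\diff{\Qd F}{v^{\qg,r}}$ uniquely. Expanding both sides in $\ql$ using the known $\ql$-expansions of the periods $p_\qa$ and of $\frac{1}{E-\ql}$, and matching coefficients, yields for each $r$ an explicit formula for $\diff{\Qd F}{v^{\qg,r}}$ in terms of the source data and of lower-order quantities already determined. This produces the entire collection of first partial derivatives of $\Qd F$ on the jet space. It then remains to check integrability, i.e. that the $1$-form $\sum_{\qg,r}\diff{\Qd F}{v^{\qg,r}}\,dv^{\qg,r}$ is closed; this closedness is precisely the compatibility encoded in the $\ql$-dependence of \eqref{AI}, and once it is verified $\Qd F$ exists and is determined up to an additive constant at each genus.

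Applying this to the order $\qe^0$ equation fixes $\diff{F_1}{u^i}$ and $\diff{F_1}{u^i_x}$; the coefficient of $\log u^i_x$ evaluates to $\frac1{24}$, accounting for the term $\frac1{24}\sum_i\log u^i_x$, while the purely $u$-dependent part is recognized as the logarithmic derivative of the isomonodromic tau-function $\qt_I$ and of $J=\det\kk{\diff{v^\qa}{u^i}}$, so that integration reproduces $F_1=\log\frac{\qt_I(u)}{J^{1/24}(u)}+\frac1{24}\sum_i\log u^i_x$. For $g\geq2$ I would proceed by induction: if $F_1,\dots,F_{g-1}$ belong to $C^\infty(v)[v_x,v_{xx},\dots][\frac{1}{u^1_x\cdots u^n_x}]$ with the asserted jet orders, then the inhomogeneous term of the genus $g$ equation is of the same type, and the inversion above returns $F_g$ of the same type. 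The sharp bound that the highest jet variable occurring in $F_g$ is $v^{(3g-2)}$ I would extract from a grading argument: assigning $\deg v^{\qg,k}=k$ and following how $\qp_x$ and the inversion shift this degree, one shows inductively that derivatives of order exceeding $3g-2$ never arise.

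The main obstacle is the invertibility statement, and in particular the interplay between the $\ql$-expansion and the jet structure: one must establish simultaneously that each $\diff{\Qd F}{v^{\qg,r}}$ is extracted without ambiguity and that this extraction preserves polynomiality in the jets together with the exact denominator $\frac{1}{u^1_x\cdots u^n_x}$ and the degree bound $3g-2$. The accompanying closedness of the associated $1$-form, which upgrades a formal recursion to an actual function $\Qd F$, is the delicate compatibility that requires the most care.
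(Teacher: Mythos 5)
This theorem is not proved in the paper: it is imported verbatim from \cite{dubrovin2001normal} (where it is the content of the loop-equation analysis in Sect.\,3.10), so there is no in-paper argument to compare yours against. Your outline does reproduce the broad strategy of the original proof --- expand $\Qd F$ in $\qe^2$, observe that every genus is governed by the same linear operator $\mathcal D(\ql)$ plus the extra first-order term on the left of \eqref{AI}, and determine the jet derivatives of $F_g$ from the $\ql$-dependence --- so as a plan it is pointed in the right direction.

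As a proof, however, it has gaps exactly where you say "the main obstacle" lies, and one of them is a concrete risk of going wrong rather than merely an omission. The invertibility is \emph{not} obtained by expanding the loop equation at $\ql=\infty$ and "matching coefficients": at $\ql=\infty$ the equation is regular (this is part of Lemma \ref{BE} in the present paper), and the expansion there only returns the individual constraints $\mathcal D_m F_g=\cdots$, from which no single jet derivative can be isolated. The actual mechanism is local analysis at the poles $\ql=u^i$: the periods $p_\qa(v;\ql)$ have square-root branch points at $\ql=u^i$, the highest-order pole of $\mathcal D(\ql)F_g$ at $\ql=u^i$ has coefficient $-C_N(u^{i,1})^N\diff{F_g}{u^{i,N}}$, and one solves for the top jet derivatives downward from the most singular terms. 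Your sketch never commits to this, and without it neither the uniqueness, nor the appearance of exactly the denominator $\frac{1}{u^1_x\cdots u^n_x}$, nor the bound $v^{(3g-2)}$ can be extracted; the bound in particular is not a consequence of the grading $\deg v^{\qg,k}=k$ alone (note $F_g$ has differential degree $2g-2$, strictly less than $3g-2$ for $g\geq 2$, precisely because of the negative-degree denominators), but of counting how many times $\qp_x$ and the pole-order reduction are applied in the recursion. Finally, the compatibility of the overdetermined $\ql$-family of equations --- your "closedness of the $1$-form" --- is the substantive existence statement and is asserted rather than argued; in \cite{dubrovin2001normal} it rests on the genus-one result of \cite{dubrovin1998bihamiltonian} and a nontrivial consistency check of the recursion, not on a formal integrability remark.
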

By using the above theorem, the DZ hierarchy associated with a semisimple Frobenius manifold $M$ is obtained by performing the quasi-Miura transformation
\[
v^\qa\mapsto w^\qa = v^\qa+\eta^{\qa\qb}\qe^2\frac{\qp^2\Qd F}{\qp t^{\qb,0}\qp t^{1,0}},\quad \qa = 1,\cdots, n
\]
to the Principal Hierarchy \eqref{AK} and $w^\qa$ are called the normal coordinates of the DZ hierarchy. The tau-cover of the DZ hierarchy can be obtained by the same quasi-Miura transformation from that of the Principal Hierarchy, it can be represented as
\[
\qe\diff{\mathcal F}{t^{\qa,p}} = f_{\qa,p},\quad \qe\diff{f_{\qb,q}}{t^{\qa,p}} = \Qo_{\qa,p;\qb,q},\quad \diff{w^\qb}{t^{\qa,p}} = \eta^{\qb\ql}\qp_x\Qo_{\ql,0;\qa,p},
\]
where the functions $\Qo_{\qa,p;\qb,q}$ is given by
\[
\Qo_{\qa,p;\qb,q} = \Qo_{\qa,p;\qb,q}^{[0]}+\qe^2\frac{\qp^2\Qd F}{\qp t^{\qa,p}\qp t^{\qb,q}}.
\]
We call $\mathcal F$, $f_{\qa,p}$ and $\Qo_{\qa,p;\qb,q}$ the free energy, the one-point functions and the two-point functions of the DZ hierarchy respectively.  Moreover the tau-cover of the DZ hierarchy admits linearized Virasoro symmetries
\[
\diff{\mathcal F}{s_m} = \exp(- \mathcal F)L_m(\exp(\mathcal F)),\quad\diff{f_{\qa,p}}{s_m} = \qe\diff{}{t^{\qa,p}}\diff{\mathcal F}{s_m},\quad \diff{w^\qa}{s_m} = \qe\eta^{\qa\qb}\diff{}{t^{1,0}}\diff{f_{\qb,0}}{s_m},
\]
where $m\geq -1$.

\section{Linearization of Virasoro symmetries}
\label{line}
\subsection{Bihamiltonian structure and quasi-triviality}
Throughout this section, we fix a semisimple Frobenius manifold $M$ of dimension $n$. Let us consider a deformation of the bihamiltonian structure \eqref{AM},\eqref{AN} of the Principal Hierarchy of $M$ with constant central invariants. It follows from \cite{dubrovin2018bihamiltonian} that the deformed bihamiltonian structure uniquely determines a deformation of the Principal Hierarchy which is tau-symmetric. In another word, the deformed bihamiltonian structure determines a deformation of the tau-cover of the Principal Hierarchy, which can be represented by
\begin{equation*}\qe\diff{\mathcal F}{t^{\qa,p}} = f_{\qa,p},\quad\qe\diff{f_{\qb,q}}{t^{\qa,p}} = \Qo_{\qa,p;\qb,q},\quad \diff{w^\qb}{t^{\qa,p}} = \eta^{\qb\ql}\qp_x\Qo_{\ql,0;\qa,p}.
\end{equation*}
Here the two-point functions $\Qo_{\qa,p;\qb,q}$ are differential polynomials in $w^1,\cdots, w^n$ whose leading terms are $\Qo_{\qa,p;\qb,q}^{[0]}$ after we replace each $w^\ql$ by $v^\ql$. The unknown functions $w^\qa$ are called the normal coordinates of the deformed integrable hierarchy. In terms of the normal coordinates, we can represent the deformed bihamiltonian structure in the form
\[
\{w^\qa(x),w^\qb(y)\}_a = \sum_{g\geq 0}\sum_{k=0}^{2g+1}\qe^{2g}P_{a;g,k}^{\qa\qb}(x)\qd^{(2g+1-k)}(x-y),\quad a = 1,2.
\]
Here $P_{a;g,k}^{\qa\qb}$ are differential polynomials in $w^1,\cdots, w^n$ of differential degree $k$. Recall that the ring of differential polynomials is graded with respect to the differential degree $\deg_x$ defined by $\deg_x w^{\qa,r} = r$. It was proved in \cite{DLZ-1} that such a deformation is quasi-trivial, i.e., there exists a unique quasi-Miura transformation
\begin{equation}
\label{AQ}
v^\qa\mapsto w^\qa = v^\qa+\sum_{g\geq 1}\qe^{2g}Q^\qa_{g}(v_x,v_{xx},\cdots,v^{(3g)}),\quad Q_g^\qa\in \mathcal S_{2g}
\end{equation}
such that the Poisson bracket $\{-,-\}_a$ is obtained from $\{-,-\}_a^{[0]}$ by the above transformation. Here we denote
\[
\mathcal S =  C^\infty(v)\left[v_x,v_{xx},\cdots\right]\left[\frac{1}{u^1_x\cdots u^n_x}\right].
\]
Note that $\mathcal S$ is also graded by the differential degree with $\deg_x u^i_x = 1$ and we denote by $\mathcal S_d$ the subspace consists of homogeneous elements with differential degree $d$. 

Due to the fact that the deformed bihamiltonian structure is tau-symmetric, it follows from Theorem 3.8.24 of \cite{dubrovin2018bihamiltonian} that the functions $Q_g^\qa$ in the quasi-Miura transformation in \eqref{AQ} can be represented in the form
\begin{equation}
\label{AP}
Q_g^\qa = \eta^{\qa\qb}\frac{\qp^2}{\qp t^{\qb,0}\qp t^{1,0}}T_g(v,v_x,\cdots,v^{(m_g)}),\quad g\geq 2.
\end{equation}
To prove the main theorem of the present paper, we first need to show that $T_g\in\mathcal S_{2g-2}$ and $m_g\leq 3g-2$.

Let us denote by $\mathcal C$ the space of smooth functions depending on the jet variables $u,u_x,u_{xx},\cdots$. It is clear that $\mathcal S$ is a subspace of $\mathcal C$. The derivation $\qp_x$ can be viewed as a linear operator on $\mathcal C$ by
\[
\qp_x = \sum_{i=1}^n\sum_{s\geq 0}u^{i,s+1}\diff{}{u^{i,s}}.
\]
Consider the following operators $\qd_i$ of variational derivatives on $\mathcal C$ defined by
\[
\qd_i = \sum_{s\geq 0}(-\qp_x)^s\diff{}{u^{i,s}},\quad i=1,\cdots,n.
\]
We have the following lemma.
\begin{Lem}
\label{AS}
The following properties for the operators $\qd_i$ hold true:
\begin{enumerate}
\item For any function $f\in\mathcal C$, $\qd_i\kk{\qp_xf}= 0$.
\item If an element $f\in\mathcal S_{\geq 2}$ satisfies $\qd_if = 0$, then there exists $g\in\mathcal S$ such that $f = \qp_x g$.
\end{enumerate}
\end{Lem}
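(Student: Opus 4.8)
The two assertions have very different flavours, so I would handle them separately. Assertion (1) is purely formal. Writing $\partial_{i,s}:=\partial/\partial u^{i,s}$, the single ingredient is the commutation identity $\partial_{i,s}\,\partial_x=\partial_x\,\partial_{i,s}+\partial_{i,s-1}$ (with $\partial_{i,-1}:=0$), which follows at once from $\partial_x=\sum_{j,t}u^{j,t+1}\partial_{j,t}$. Feeding it into $\delta_i(\partial_x f)=\sum_{s\ge0}(-\partial_x)^s\partial_{i,s}(\partial_x f)$ and reindexing, the contribution of $\partial_x\partial_{i,s}$ cancels the contribution of $\partial_{i,s-1}$ by a telescoping sum, giving $\delta_i(\partial_x f)=0$. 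This uses nothing about $\mathcal S$, so it holds on all of $\mathcal C$.

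For assertion (2) the statement is the local exactness of the variational complex, and the real issue is the localization. I would first reduce to the homogeneous case: since each summand of $\delta_i$ lowers $\deg_x$ by $s$ and then raises it by $s$, the operator $\delta_i$ preserves differential degree, so it suffices to produce a primitive for each homogeneous component $f\in\mathcal S_d$, $d\ge2$, separately. For such an $f$, let $N\ge1$ be its top derivative order. Reading $\delta_i f=0$ at the highest order $2N$ shows $\partial_{i,N}\partial_{j,N}f=0$, i.e. $f$ is affine in the top jets $u^{1,N},\dots,u^{n,N}$; the subleading coefficients of $\delta_i f=0$ furnish the ``curl'' relations $\partial_{j,N-1}A_k=\partial_{k,N-1}A_j$ among the coefficients $A_k:=\partial_{k,N}f$, which are exactly the integrability conditions needed to solve $\partial h/\partial u^{k,N-1}=A_k$ for some $h$ of order $\le N-1$. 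For any such $h$ one checks $f-\partial_x h$ has order $\le N-1$, and by (1) it again lies in $\ker\delta$, so induction on $N$ applies; the induction bottoms out because an order-$\le1$ element of $\ker\delta$ is forced to be affine in $u_x$, hence of degree $\le1$, hence zero when $d\ge2$.

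The crux -- and the reason the hypothesis is $d\ge2$ rather than $d\ge1$ -- is that the antidifferentiation producing $h$ need not stay inside $\mathcal S$: the operator $\partial_x^{-1}$ does not preserve $\mathcal S$. Concretely, solving $\partial h/\partial u^{k,1}=A_k$ at the last peeling step means integrating $A_k\in\mathcal S$ in $u^k_x$, and since $u^k_x$ is inverted in $\mathcal S$ a term of $A_k$ proportional to $(u^k_x)^{-1}$ integrates to $\log u^k_x\notin\mathcal S$. This is genuine: $u^k_{xx}/u^k_x=\partial_x\log u^k_x$ lies in $\mathcal S_1\cap\ker\delta$ but is not $\partial_x$ of anything in $\mathcal S$, so the lemma is false in differential degree $1$. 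I expect proving that no logarithm is actually needed when $d\ge2$ to be the main obstacle. I would organize it as follows: enlarge $\mathcal S$ to $\hat{\mathcal S}=\mathcal S[\log u^1_x,\dots,\log u^n_x]$, which is closed under all the antiderivatives used above, so the induction unconditionally yields a primitive $g\in\hat{\mathcal S}$; moreover logarithms can enter only at the final step and only to first power, so $g=g_0+\sum_k c_k\log u^k_x$ with $g_0,c_k\in\mathcal S$. Then $\partial_x g=f\in\mathcal S$ has no logarithmic part, which forces $\partial_x c_k=0$; since $\ker\partial_x\cap\mathcal S=\mathbb R$, each $c_k$ is constant, and the leftover $c_k\,u^k_{xx}/u^k_x$ is a term of differential degree $1$, absent from $f\in\mathcal S_d$ with $d\ge2$. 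Hence $c_k=0$ and $g=g_0\in\mathcal S$. Equivalently, one can argue directly that the curl relations together with the requirement that every $A_j$ lie in $\mathcal S$ already exclude the offending $(u^k_x)^{-1}$ coefficients once $d\ge2$; either way, the combination of $\delta_i f=0$ and the degree hypothesis is precisely what must be used in tandem.
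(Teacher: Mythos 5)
Your proposal is correct and follows essentially the same route as the paper's proof: part (1) by the commutator $[\partial/\partial u^{i,s},\partial_x]$ and telescoping, and part (2) by induction on the top jet order $N$, using the coefficient of $u^{j,2N}$ in $\delta_i f$ to force affineness in the top jets, the coefficient of $u^{j,2N-1}$ to obtain the curl relations, and then subtracting $\partial_x h$. Your treatment of the logarithmic obstruction at the last peeling step is also the paper's: there the potential is allowed the form $h=\sum_i a_i\log u^i_x+b$ and the coefficients $a_i$ are killed by exactly the degree hypothesis $\deg_x f\geq 2$.
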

\begin{proof}
The first property is trivial, so we only need to prove the second one. Let us denote by $\mathcal S^{(N)}$ the subspace of $\mathcal S$ consists of elements that do not depend on $u^{i,k}$ for any $i = 1,\cdots,n$ and $k>N$. Without loss of generality, we can assume that $f\in\mathcal S_d$ for $d\geq 2$ and that $f\in\mathcal S^{(N)}$ for a certain integer $N$. 

It follows from the definition of the operator $\qd_i$ that
\[
\diff{}{u^{j,2N}}(\qd_if) = (-1)^N\diff{}{u^{j,N}}
\diff{f}{u^{i,N}}.
\]
So by using $\qd_i f = 0$ we see that $f$ must have the form
\begin{equation}
\label{AO}
f =\sum_{i=1}^nf_i u^{i,N}+g,\quad g,f_1,\cdots,f_n\in\mathcal S^{(N-1)}.
\end{equation}
From the above expression of the function $f$, we arrive at the following identity
\begin{equation*}
0 = \diff{}{u^{j,2N-1}}(\qd_if) = (-1)^N\kk{\diff{f_i}{u^{j,N-1}}-\diff{f_j}{u^{i,N-1}}}.
\end{equation*}
Therefore there exists $h\in\mathcal C$ such that 
\[
f_i = \diff{h}{u^{i,N-1}},\quad i = 1,\cdots, n.
\]
If $N\neq 2$, it is clear that $h$ can be chosen such that $h\in\mathcal S^{(N-1)}$. If $N=2$, we can show that $h$ can be chosen to have the form
\[
h = \sum_{i=1}^n a_i\log u^i_x+b,\quad a_i\in C^\infty(u),\ b\in\mathcal S^{(1)}.
\]
However by using the assumption $f\in\mathcal S_{\geq 2}$, it follows from \eqref{AO} that $f_i\in\mathcal S_{\geq 0}$ and therefore all smooth functions $a_i$ must vanish. We conclude that we can always assume $h\in\mathcal S^{(N-1)}$ and consequently $f-\qp_xh\in\mathcal S^{(N-1)}$. Now the lemma can be proved by induction on $N$.
\end{proof}
\begin{Prop}
\label{AR}
The functions $T_g$ given in \eqref{AP} lie in the space $\mathcal S_{2g-2}^{(3g-2)}$ for $g\geq 2$.
\end{Prop}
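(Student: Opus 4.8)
The plan is to read off both the differential degree and the differential order of $T_g$ from the corresponding data for $Q_g^\qa$ by studying the operator $\mathcal O^\qa:=\eta^{\qa\qb}\diff{}{t^{\qb,0}}\diff{}{t^{1,0}}$, which satisfies $\mathcal O^\qa T_g=Q_g^\qa$ by \eqref{AP}. The input is that $Q_g^\qa\in\mathcal S_{2g}$ and that $Q_g^\qa$ depends on the jet variables only up to $v^{(3g)}$, both from \eqref{AQ}. On the jet space $\diff{}{t^{1,0}}=\qp_x$, while $\diff{}{t^{\qb,0}}=\sum_{s\geq0}\qp_x^s\kk{A^\ql_\qb}\diff{}{v^{\ql,s}}$ is the derivation attached to $A^\ql_\qb:=\diff{v^\ql}{t^{\qb,0}}=\eta^{\ql\qg}\kk{\qp_\mu\qp_\qg\qth_{\qb,1}}v^{\mu,1}=M^\ql_{\qb\mu}v^{\mu,1}$, where, using $\qp_\mu\qp_\qg\qth_{\qb,1}=c^\nu_{\mu\qg}\eta_{\qb\nu}$, one has $M^\ql_{\qb\mu}=\eta^{\ql\qg}c^\nu_{\mu\qg}\eta_{\qb\nu}$. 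The first step I would carry out is a leading-term computation: if $T\in\mathcal S$ depends genuinely on $v^{(m)}$, meaning $\diff{T}{v^{\ql,m}}\neq0$ for some $\ql$, then $\mathcal O^\qa T$ is linear in $v^{(m+2)}$ at top order and the coefficient of $v^{\mu,m+2}$ equals $\eta^{\qa\qb}M^\ql_{\qb\mu}\diff{T}{v^{\ql,m}}$.

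The decisive point, which I would treat next, is that this leading symbol is invertible. Specializing $\mu=1$ and using that $\qp_1$ is the unit of the Frobenius algebra, so that $c^\nu_{1\qg}=\qd^\nu_\qg$ and hence $M^\ql_{\qb1}=\qd^\ql_\qb$, the coefficient of $v^{1,m+2}$ in $\mathcal O^\qa T$ collapses to $\eta^{\qa\ql}\diff{T}{v^{\ql,m}}$. Since $\eta$ is nondegenerate, this is nonzero for some $\qa$ as soon as $\diff{T}{v^{\ql,m}}\neq0$ for some $\ql$. Thus genuine dependence of $T$ on $v^{(m)}$ always forces genuine dependence of $\mathcal O^\qa T$ on $v^{(m+2)}$; in particular $\mathcal O^\qa T=0$ for all $\qa$ forces $T$ to be a constant.

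With this symbol lemma established, both bounds drop out. For the order, if $T_g$ depended on $v^{(m)}$ with $m>3g-2$, then $Q_g^\qa=\mathcal O^\qa T_g$ would depend on $v^{(m+2)}$ with $m+2>3g$, contradicting the order bound on $Q_g^\qa$; hence $m_g\leq 3g-2$, that is $T_g\in\mathcal S^{(3g-2)}$. For the degree, note that $\qp_x$ and $\diff{}{t^{\qb,0}}$ each raise $\deg_x$ by one, so $\mathcal O^\qa$ raises it by two. Writing $T_g=\sum_d T_g^{[d]}$ in homogeneous components, $\mathcal O^\qa T_g^{[d]}$ has degree $d+2$; matching against $Q_g^\qa$, which is homogeneous of degree $2g$, gives $\mathcal O^\qa T_g^{[2g-2]}=Q_g^\qa$ together with $\mathcal O^\qa T_g^{[d]}=0$ for every $d\neq 2g-2$. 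By the symbol lemma each annihilated component is constant and therefore sits in degree zero, so for $g\geq 2$ the only surviving off-degree piece is an additive constant $T_g^{[0]}$, which does not affect $Q_g^\qa$ and may be dropped. Hence $T_g$ is homogeneous of degree $2g-2$, and combining with the order bound yields $T_g\in\mathcal S_{2g-2}^{(3g-2)}$.

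I expect the main obstacle to be precisely the invertibility of the leading symbol in the second step: the whole argument hinges on isolating the top-order symbol of $\mathcal O^\qa$ and showing it is an isomorphism, which is exactly where the unit vector field $\qp_1$ and the nondegeneracy of $\eta$ enter---without it one could bound but not determine the order. Two technical points deserve attention. First, elements of $\mathcal S$ carry the factors $1/(u^1_x\cdots u^n_x)$, but these involve only first-order jets and so do not interfere with the top-order bookkeeping, though one should confirm that $\diff{T_g}{v^{\ql,m}}$ is a genuine nonzero element of $\mathcal S$ and that passing between $u$- and $v$-jets preserves the differential order. Second, I am reading the existence of $T_g$ as an element of $\mathcal S$ with finite order off \eqref{AP}; if one prefers instead to reconstruct $T_g$ from $Q_g^\qa$ within $\mathcal S$, the integration can be organized through Lemma \ref{AS}, whose second part yields a primitive in $\mathcal S$ with differential order lowered by one and feeds directly into the same count.
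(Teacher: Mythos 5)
Your leading-symbol computation is correct (indeed $M^\ql_{\qb 1}=\qd^\ql_\qb$ because $\qp_1$ is the unit of the Frobenius algebra, so the coefficient of $v^{1,m+2}$ in $\mathcal O^\qa T$ is $\eta^{\qa\ql}\diff{T}{v^{\ql,m}}$), and it does give a clean, direct proof of the order bound $m_g\leq 3g-2$. But the main body of your argument starts from the premise that $T_g\in\mathcal S$, and that premise is not supplied by \eqref{AP}: Theorem 3.8.24 of \cite{dubrovin2018bihamiltonian} only provides $T_g$ as a smooth function of finitely many jet variables, and the assertion $T_g\in\mathcal S$ --- that $T_g$ is polynomial in the higher jets with denominators only powers of $u^1_x\cdots u^n_x$ --- is precisely the nontrivial content of the proposition (it is what the proof of the Main Theorem needs in order to apply Proposition \ref{BF} to $G_{g+1}=F^{DZ}_{g+1}-F_{g+1}$). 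Your degree argument also rests on this premise, since the homogeneous decomposition $T_g=\sum_d T_g^{[d]}$ only makes sense inside a graded ring such as $\mathcal S$; a general smooth function of the jets admits no such decomposition. Note also that $\mathcal S$-membership does not follow automatically from $\qp_x^2 T_g=\eta_{1\qa}Q^\qa_g\in\mathcal S$: the function $\log u^i_x$ lies outside $\mathcal S$ while its $\qp_x$-derivative $u^i_{xx}/u^i_x$ lies in $\mathcal S_1$, and this is exactly the obstruction that part (2) of Lemma \ref{AS} must exclude by means of the hypothesis $f\in\mathcal S_{\geq 2}$.

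The fix you mention in your last sentence is the right one, but it is not an optional alternative --- it is essentially the entire proof given in the paper. Contracting \eqref{AP} with $\eta_{1\qa}$ gives $\qp_x^2 T_g=\eta_{1\qa}Q^\qa_g\in\mathcal S_{2g}^{(3g)}$, and Lemma \ref{AS}(2) applies twice (legitimately, since $2g\geq 2$ and $2g-1\geq 2$ for $g\geq 2$), each time producing a primitive lying in $\mathcal S$ whose differential degree and order both drop by one; this yields $T_g\in\mathcal S_{2g-2}^{(3g-2)}$ up to an irrelevant additive constant, with membership, degree and order all obtained from the same integration, after which your symbol lemma is no longer needed. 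I would therefore restructure the argument: either carry out the two integrations via Lemma \ref{AS} as the actual proof, or, if you wish to keep the symbol argument for the order and the degree, you must first independently establish $T_g\in\mathcal S$, and that step cannot be read off \eqref{AP}.
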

\begin{proof}
It follows from \eqref{AP} that $\eta_{1\qa}Q^\qa_g =\qp_x^2T_g$ for $g\geq 2$. Therefore by using Lemma \ref{AS} we know that $\qp_xT_g\in \mathcal S_{2g-1}^{(3g-1)}$. By using this lemma again, we arrive at the fact that $T_g\in \mathcal S_{2g-2}^{(3g-2)}$. The proposition is proved.
\end{proof}

\subsection{Proof of the main theorem}
Let us fix a tau-symmetric deformation of the bihamiltonian structure \eqref{AM},\eqref{AN} of the Principal Hierarchy associated with $M$. We denote by $Z$ the tau-function of the deformed integrable hierarchy determined by the deformation of the bihamiltonian structure and by $w^\qa$ the normal coordinates of the deformed integrable hierarchy. Then the tau-cover of the deformed integrable hierarchy can be written in the form
\begin{equation}
\label{AY}
\qe\diff{\mathcal F}{t^{\qa,p}} = f_{\qa,p},\quad \qe\diff{f_{\qb,q}}{t^{\qa,p}} = \Qo_{\qa,p;\qb,q},\quad \diff{w^\qb}{t^{\qa,p}} = \eta^{\qb\ql}\qp_x\Qo_{\ql,0;\qa,p}.
\end{equation}
According to Theorem \ref{AT}, the deformed integrable hierarchy admits Virasoro symmetries $\diff{}{s_m}$ whose actions on $Z$ are given by
\begin{equation}
\label{AU}
\diff{Z}{s_m} = L_mZ+O_m Z,\quad m\geq -1,
\end{equation}
where $O_m$ are differential polynomials. It is proved in \cite{dubrovin2018bihamiltonian} that after applying a Miura type transformation
\[
w^\qa\mapsto\tilde w^\qa = w^\qa+\sum_{g\geq 1}\qe^{2g}A^\qa_g,
\]
where $A^\qa_g$ are differential polynomials in $w^1,\cdots, w^n$ of differential degree $2g$, the tau-cover of the deformed integrable hierarchy changes accordingly:
\begin{equation}
\label{AV}
\mathcal{\tilde F} = \mathcal F+G,\quad  \tilde{f}_{\qa,p} = f_{\qa,p}+\qe\diff{G}{t^{\qa,p}},\quad \tilde{w}^\qa = w^\qa +\qe^2\eta^{\qa\qb}\frac{\qp^2 G}{\qp t^{\qb,0}\qp t^{1,0}},
\end{equation}
where $G = \sum_{g\geq 1}\qe^{2g-2}G_g$, and $G_g$ are homogeneous differential polynomials of differential degree $2g-2$. Here $\mathcal{\tilde F}$, $\tilde{f}_{\qa,p}$ and $\tilde{w}^\qa$ are the free energy, one-point functions and normal coordinates of the tau-cover of the deformed integrable hierarchy after applying the Miura type transformation. Conversely, any $G = \sum_{g\geq 1}\qe^{2g-2}G_g$ defines a Miura type transformation and changes the tau-cover of the deformed integrable hierarchy according to \eqref{AV}. Let us first study how the Miura type transformations affect the Virasoro symmetries \eqref{AU}.

After applying a Miura type transformation given by \eqref{AV}, the tau-function $Z$ is changed to $\tilde Z = Z\exp(G)$. Then the actions of the Virasoro symmetries \eqref{AU} on $\tilde Z$ are represented in the form
\[
\diff{\tilde Z}{s_m} = L_m\tilde Z+\tilde O_m \tilde Z,\quad m\geq -1,
\]
where $\tilde O_m$ is given by
\begin{align*}
\tilde O_m =&\, O_m-\sum a_{m}^{\qa,p;\qb,q}\kk{2\qe f_{\qa,p}\diff{G}{t^{\qb,q}}+\qe^2\diff{G}{t^{\qa,p}}\diff{G}{t^{\qb,q}}+\qe^2\frac{\qp^2G}{\qp t^{\qa,p}\qp t^{\qb,q}}}\\
&+\diff{G}{s_m}-\sum {b}_{m;\qa,p}^{\qb,q} t^{\qa,p}\frac{\qp G}{\qp t^{\qb,q}},\quad m\geq -1.
\end{align*}
It follows from \eqref{AY} and \eqref{AU} that for any differential polynomial $Q$ in $w^1,\cdots,w^n$, the actions of the Virasoro symmetries on $Q$ are given by
\[
\diff{Q}{s_m} = 2\qe\sum a_{m}^{\qa,p;\qb,q}f_{\qa,p}\diff{Q}{t^{\qb,q}}+X_m+\sum {b}_{m;\qa,p}^{\qb,q} t^{\qa,p}\frac{\qp Q}{\qp t^{\qb,q}},\quad m\geq -1,
\]
where $X_m$ is a certain differential polynomial determined by $L_m$ and $Q$. For this reason it is easy to see that $\tilde O_m$ is also a differential polynomial. Hence to linearize the Virasoro symmetries is equivalent to solve the following equations for a differential polynomial $G$:
\begin{align}
\label{AW}
\diff{G}{s_m} =&\, \sum a_{m}^{\qa,p;\qb,q}\kk{2\qe f_{\qa,p}\diff{G}{t^{\qb,q}}+\qe^2\diff{G}{t^{\qa,p}}\diff{G}{t^{\qb,q}}+\qe^2\frac{\qp^2G}{\qp t^{\qa,p}\qp t^{\qb,q}}}\\
\notag
&+\sum {b}_{m;\qa,p}^{\qb,q} t^{\qa,p}\frac{\qp G}{\qp t^{\qb,q}}-O_m,\quad m\geq -1.
\end{align}

Let us make the expansion 
\[G = \sum_{g\geq 1}\qe^{2g-2}G_g,\quad O_m = \sum_{g\geq 1}\qe^{2g-2}O_{m;g},
\]
where $G_g$ and $O_{m;g}$ are differential polynomials of differential degree $2g-2$. Then by comparing the leading terms of the equations \eqref{AW}, we obtain the following equations for $G_1$:
\[
\mathcal D_m(G_1) = O_{m,1},\quad m\geq -1,
\]
where $\mathcal D_m$ is the operator defined in \eqref{AX}. Once such a differential polynomial $G_1$ is found, we can apply the Miura type transformation
\begin{equation*}
\mathcal{ F} \mapsto \mathcal F+G_1,\quad  {f}_{\qa,p} \mapsto f_{\qa,p}+\qe\diff{G_1}{t^{\qa,p}},\quad {w}^\qa \mapsto w^\qa +\qe^2\eta^{\qa\qb}\frac{\qp^2 G_1}{\qp t^{\qb,0}\qp t^{1,0}}
\end{equation*}
to the tau-cover of the deformation of the Principal Hierarchy. If we still denote by \eqref{AU} the Virasoro symmetries of the deformed integrable hierarchy after the above Miura type transformation, then we see that $O_m = \sum_{g\geq 2}\qe^{2g-2}O_{m;g}$. In order to eliminate $O_{m;2}$ from the representations of the Virasoro symmetries, we need to find a differential polynomial $G_2$ such that
\[
\mathcal D_m(G_2) = O_{m,2},\quad m\geq -1
\]
and perform a Miura type transformation
\begin{equation*}
\mathcal{ F} \mapsto \mathcal F+\qe^2G_2,\quad  {f}_{\qa,p} \mapsto f_{\qa,p}+\qe^3\diff{G_2}{t^{\qa,p}},\quad {w}^\qa \mapsto w^\qa +\qe^4\eta^{\qa\qb}\frac{\qp^2 G_2}{\qp t^{\qb,0}\qp t^{1,0}}.
\end{equation*}
By continuing this procedure recursively, we are able to prove the main theorem. Before turning to the actual proof, we first make some preparations. 

In what follows, we will use $u^i = u^i(w)$ to denote the canonical coordinates of the semisimple Frobenius manifold $M$. Let us recall some basic facts related to the canonical coordinates following \cite{dubrovin2001normal,dubrovin1996geometry}. In terms of $u^1,\cdots,u^n$, the flat metric $\eta$ of $M$ has the expression
\[
\eta = \sum_{i=1}^nf_i(u)(du^i)^2,\quad f_i(u)\neq 0.
\]
We denote by $\psi_{i1} = \sqrt{f_i}$ the Lam\'e coefficients, and by $\qg_{ij} = (\psi_{i1})^{-1}\qp_i\psi_{j1}$ the rotation coefficients for $i\neq j$. Following \cite{dubrovin2001normal} we say that $M$ is \textit{reducible} if there exists a partition of the set $\{1,2,\cdots,n\}$ into two nonempty disjoint subsets $I,J$ such that
\[
\qg_{ij} = 0,\quad \forall\,i\in I,\forall\, j\in J,
\]
and say that $M$ is \textit{irreducible} if it is not reducible. Note that if the dimension $n$ of $M$ satisfies $n\geq 2$, the irreducibility of $M$ implies that for each $i$ there exists $j\neq i$ such that $\qg_{ij}\neq 0$.

We define a strict partial order $\prec$ on the space of homogeneous differential polynomials $\mathcal A_d$ with differential degree $d\geq 1$. For any $W\in\mathcal A_d$, we can represent it in the form
\[
W = \sum W_{\mu_1,\cdots,\mu_n}(u)u^{1,(\mu_1)}\cdots u^{n,(\mu_n)},
\]
where each $\mu_i = (\mu_{i,1},\mu_{i,2},\cdots)$ is a partition of a non-negative integer $|\mu_i|$ and 
\[
u^{i,(\mu_i)} = u^{i,\mu_{i,1}}u^{i,\mu_{i,2}}\cdots.
\]
Note that $|\mu_1|+\cdots+|\mu_n| = d$, hence after reordering $(\mu_1,\cdots,\mu_n)$ we obtain a partition of $d$. Then the relation $\prec$ on $\mathcal A_d$ is induced by the total order on the space of all partitions of $d$. To illustrate this definition, we consider the partial order on $\mathcal A_4$ for a $2$-dimensional Frobenius manifold. For example, we have the relations
\[
(u^{2,1})^3u^{1,1}\prec (u^{1,1})^2u^{2,2}\prec u^{1,3}u^{2,1}\prec u^{2,4},
\]
which correspond to the following relations of the partitions of $4$:
\[
(1,1,1,1)\prec (2,1,1)\prec(3,1)\prec (4).
\]
There are also some monomials that are not comparable with respect to the above partial order, for example $u^{1,3}u^{2,1}$ and $u^{1,1}u^{2,3}$ are not comparable.
\begin{Lem}
\label{BD}
For $k\geq 1$, we have
\[
\mathcal D(\ql)u^{i,k} = -\kk{1+\frac k2}\frac{u^{i,k}}{(u^i-\ql)^2}-ku^{i,k}\sum_{j\neq i}\frac{\psi_{j1}}{\psi_{i1}}\qg_{ij}\kk{\frac{1}{u^j-\ql}-\frac{1}{u^i-\ql}}+R_{i,k},
\]
here $R_{i,k}$ consists of terms smaller than $u^{i,k}$ with respect to the order $\prec$. In particular, $R_{i,1} = 0$.
\end{Lem}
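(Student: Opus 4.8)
The plan is to prove the formula by induction on $k$, building the action of $\mathcal D(\ql)$ on the jet variable $u^{i,k}=\qp_x^k u^i$ out of its action on $u^{i,k-1}$ together with the commutator $\mathcal E:=\fk{\mathcal D(\ql)}{\qp_x}$. Writing $u^{i,k}=\qp_x u^{i,k-1}$ gives the recursion
\[
\mathcal D(\ql)u^{i,k}=\qp_x\bigl(\mathcal D(\ql)u^{i,k-1}\bigr)+\mathcal E\,u^{i,k-1},
\]
so the whole argument reduces to three ingredients: the base datum $\mathcal D(\ql)u^i$, the structure of $\mathcal E$, and a bookkeeping of leading terms with respect to the order $\prec$. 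For the last point I record that, since $u^{i,d}$ is the $\prec$-maximal monomial of $\mathcal A_d$ and since a monomial $\prec u^{i,d}$ necessarily involves at least two jet factors, the operator $\qp_x$ sends any monomial $\prec u^{i,d}$ to a combination of monomials $\prec u^{i,d+1}$; hence the $\prec$-maximal contribution to $\mathcal D(\ql)u^{i,k}$ is always a multiple of $u^{i,k}$ and everything else is absorbed into $R_{i,k}$.

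First I would extract the genus-zero data from the loop equation \eqref{AI}. Its left-hand side is exactly $\mathcal D(\ql)\Qd F=\sum_{r}\diff{\Qd F}{v^{\qg,r}}\mathcal D(\ql)v^{\qg,r}$ expanded by the chain rule, so reading off the coefficients of $\diff{\Qd F}{v^{\qg,0}}$ and $\diff{\Qd F}{v^{\qg,1}}$ yields $\mathcal D(\ql)v^\qg=\kk{\frac{1}{E-\ql}}^\qg$ together with
\[
\mathcal E v^\qg=\mathcal D(\ql)v^{\qg,1}-\qp_x\mathcal D(\ql)v^\qg=\qp_1 p_\qa\, G^{\qa\qb}\,\qp_x\qp^\qg p_\qb.
\]
Passing to canonical coordinates, where the Frobenius product is diagonal and $E=\sum_i u^i\qp_i$, the vector field $\frac{1}{E-\ql}$ becomes $\sum_i\frac{1}{u^i-\ql}\qp_i$, and hence $\mathcal D(\ql)u^i=\frac{1}{u^i-\ql}$. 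Applying $\qp_x$ produces $-\frac{u^{i,1}}{(u^i-\ql)^2}$, so once $\mathcal E u^i$ is known the $k=1$ case follows at once, with $R_{i,1}=0$.

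The heart of the proof is therefore the evaluation of $\mathcal E u^i=\diff{u^i}{v^\qg}\,\mathcal E v^\qg$ in canonical coordinates, for which two things must be established: that $\mathcal E u^i$ is \emph{diagonal}, i.e. proportional to $u^{i,1}$ with no cross terms $u^{j,1}$ for $j\neq i$, and that its coefficient is exactly
\[
g^i=-\frac12\frac{1}{(u^i-\ql)^2}-\Sigma_i,\qquad \Sigma_i:=\sum_{j\neq i}\frac{\psi_{j1}}{\psi_{i1}}\qg_{ij}\kk{\frac{1}{u^j-\ql}-\frac{1}{u^i-\ql}}.
\]
This is where the explicit description of the periods $p_\qa$ and of the constant matrix $G^{\qa\qb}$ in canonical coordinates enters, together with the derivatives of the Lam\'e and rotation coefficients governed by the Darboux--Egoroff equations; I expect this to be the main obstacle, since both the vanishing of the off-diagonal contributions and the precise diagonal value $-\frac12$ must be matched by hand. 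The diagonality has a useful independent reformulation: computing $\fk{\mathcal D_m}{\qp_x}$ directly from the definition \eqref{AX}, and using $\qp_x f^{[0]}_{\qa,p}=\qth_{\qa,p}$, shows that $\mathcal E$ is a ($\ql$-dependent) linear combination of the Principal Hierarchy flows $\diff{}{t^{\qb,q}}$, which act diagonally in canonical coordinates as $\diff{u^i}{t^{\qb,q}}=V^i_{\qb,q}(u)\,u^{i,1}$; hence $\mathcal E u^i=g^i u^{i,1}$ automatically, and only the identification of $g^i$ remains.

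Finally I would run the induction. Because $\mathcal E$ is a combination of Principal Hierarchy flows, which commute with $\qp_x$ and act diagonally, the $\prec$-maximal part of $\mathcal E u^{i,k-1}$ is $g^i u^{i,k}$. On the other side, if by induction $\mathcal D(\ql)u^{i,k-1}=h_{k-1}u^{i,k-1}+R_{i,k-1}$ with $h_{k-1}=-\kk{1+\frac{k-1}{2}}\frac{1}{(u^i-\ql)^2}-(k-1)\Sigma_i$, then the $\prec$-maximal part of $\qp_x\bigl(\mathcal D(\ql)u^{i,k-1}\bigr)$ is $h_{k-1}u^{i,k}$, all remaining terms (including $\qp_x R_{i,k-1}$ and the terms where $\qp_x$ hits the coefficient $h_{k-1}$) being $\prec u^{i,k}$. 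Adding the two leading contributions gives the coefficient $h_{k-1}+g^i=-\kk{1+\frac k2}\frac{1}{(u^i-\ql)^2}-k\Sigma_i$, which is precisely the asserted leading coefficient of $\mathcal D(\ql)u^{i,k}$, while all other terms are collected into $R_{i,k}$. This closes the induction and proves the lemma.
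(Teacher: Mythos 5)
Your overall strategy coincides with the paper's. The paper writes $\mathcal D(\ql)$ in canonical coordinates as $\sum_{i,r}\qp_x^r\kk{\frac{1}{u^i-\ql}}\diff{}{u^{i,r}}+\sum_{i,r\geq 1}B_{i,r}\diff{}{u^{i,r}}$ and uses the recursion $B_{i,r}=\qp_xB_{i,r-1}+\qp_x^{r-1}\kk{\diff{u^i}{v^\qz}\eta^{\qz\ql}\qp_\rho\qp_\ql p_\qa v^{\rho,1}}G^{\qa\qb}\diff{p_\qb}{v^1}$; your commutator $\mathcal E=\fk{\mathcal D(\ql)}{\qp_x}$ is exactly the increment $B_{i,r+1}-\qp_xB_{i,r}$ in that recursion, and your $\prec$-bookkeeping (a monomial $\prec u^{i,d}$ has at least two jet factors, $\qp_x$ preserves this, the only degree-$k$ single-factor terms come from the two leading contributions you isolate) is correct and is the "straightforward computation" the paper alludes to. Your observation that $\fk{\mathcal D_m}{\qp_x}$ is a combination of Principal Hierarchy flows, hence acts diagonally on $u^i$, is a clean way to rule out off-diagonal terms $u^{j,k}$, $j\neq i$, which would otherwise be incomparable to $u^{i,k}$ rather than $\prec$ it.

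However, there is a genuine gap at the one step that carries all of the quantitative content of the lemma: the identity
\begin{equation*}
\mathcal E u^i \;=\; \diff{u^i}{v^\qg}\,\qp_1p_\qa\,G^{\qa\qb}\,\qp_x\qp^\qg p_\qb \;=\;\kk{-\frac12\frac{1}{(u^i-\ql)^2}-\sum_{j\neq i}\frac{\psi_{j1}}{\psi_{i1}}\qg_{ij}\kk{\frac{1}{u^j-\ql}-\frac{1}{u^i-\ql}}}u^{i,1}.
\end{equation*}
You explicitly defer this ("must be matched by hand") and never establish it; your flow-reformulation gives diagonality but says nothing about the coefficient, and without the values $-\tfrac12$ and the rotation-coefficient sum the asserted leading coefficient $-\kk{1+\frac k2}(u^i-\ql)^{-2}-k\Sigma_i$ cannot be derived. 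Proving it requires the expansion of the periods in canonical coordinates (writing $\qp_ip_\qa$ in terms of the Lam\'e coefficients and using the Gram matrix $G^{\qa\qb}$ of the periods together with the Darboux--Egoroff relations); this is nontrivial and is precisely the formula for $B_{i,1}$ that the paper imports from \cite{dubrovin2001normal} rather than reproving. To close your argument you must either carry out this computation or cite it; everything downstream of it in your proposal is fine.
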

\begin{proof}
In terms of the canonical coordinates, we can write $\mathcal D(\ql)$ in the form
\[
\mathcal D(\ql) = \sum_{i=1}^n\sum_{r\geq 0}\qp_x^r\kk{\frac{1}{u^i-\ql}}\diff{}{u^{i,r}}+\sum_{i=1}^n\sum_{r\geq 1}B_{i,r}\diff{}{u^{i,r}}.
\] 
It is proved in \cite{dubrovin2001normal} that the functions $B_{i,r}$ satisfy the relations
\begin{align*}
B_{i,1} &= -\frac 12\frac{u^{i,1}}{(u^i-\ql)^2}-u^{i,1}\sum_{j\neq i}\frac{\psi_{j1}}{\psi_{i1}}\qg_{ij}\kk{\frac{1}{u^j-\ql}-\frac{1}{u^i-\ql}}\\ &= \diff{u^i}{v^\qz}\eta^{\qz\ql}\qp_\rho\qp_\ql p_\qa G^{\qa\qb}\diff{p_\qb}{v^1}v^{\rho,1},\\
B_{i,r} &= \qp_xB_{i,r-1}+\qp_x^{r-1}\kk{\diff{u^i}{v^\qz}\eta^{\qz\ql}\qp_\rho\qp_\ql p_\qa v^{\rho,1}}G^{\qa\qb}\diff{p_\qb}{v^1},\quad r\geq 2.
\end{align*}
The lemma is proved by a straightforward computation.
\end{proof}
\begin{Lem}[ Lemma 4.2.2 of \cite{dubrovin2001normal}]
\label{BE}
Let $F\in\mathcal S^{(N)}$, then $\mathcal D(\ql)F$ is a rational function in $\ql$ with poles at most of order $N+1$ at each $\ql = u^i$ and it is regular at $\ql=\infty$. The coefficients of $(\ql-u^i)^{-N-1}$ are given by $-C_N(u^{i,1})^N\diff{F}{u^{i,N}}$ where $C_N$ are positive numbers given by
\[
C_N = N!+2^{-N}\sum_{l=1}^N\binom{N}{l}(2l-3)!!(2N-2l+1)!!.
\]
\end{Lem}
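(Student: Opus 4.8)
The plan is to work entirely in the canonical coordinates $u^1,\dots,u^n$ and to read the pole structure of $\mathcal D(\ql)F$ off the explicit expression
\[
\mathcal D(\ql) = \sum_{i=1}^n\sum_{r\geq 0}\qp_x^r\kk{\frac{1}{u^i-\ql}}\diff{}{u^{i,r}}+\sum_{i=1}^n\sum_{r\geq 1}B_{i,r}\diff{}{u^{i,r}}
\]
recorded in the proof of Lemma \ref{BD}. Since $F\in\mathcal S^{(N)}$ involves only $u^{i,r}$ with $r\leq N$, the sums over $r$ are finite, so $\mathcal D(\ql)F$ is automatically rational in $\ql$; its regularity at $\ql=\infty$ is immediate from $\mathcal D(\ql)=\sum_{m\geq -1}\mathcal D_m/\ql^{m+2}=O(\ql^{-1})$. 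Thus the whole statement reduces to a pole count at each $\ql=u^i$ together with the evaluation of the leading coefficient.

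First I would record two elementary/inductive facts. The coefficient $\qp_x^r(\frac{1}{u^i-\ql})$ is rational with a single pole of order $r+1$ at $\ql=u^i$, and its top coefficient is $-\,r!\,(u^{i,1})^r(\ql-u^i)^{-(r+1)}$, coming from the chain in which every $\qp_x$ differentiates the denominator. In parallel, by induction on $r$ using $B_{i,r}=\qp_xB_{i,r-1}+E_{i,r}$ (where $E_{i,r}=\qp_x^{r-1}(\diff{u^i}{v^\qz}\eta^{\qz\ql}\qp_\rho\qp_\ql p_\qa v^{\rho,1})G^{\qa\qb}\diff{p_\qb}{v^1}$ is the inhomogeneous term of Lemma \ref{BD}), I would show $B_{i,r}$ has a pole of order at most $r+1$ at $\ql=u^i$ and of order at most $r$ at each $\ql=u^j$ with $j\neq i$. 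Granting this, among all terms the only source of a pole of order $N+1$ at $\ql=u^i$ is the coefficient of $\diff{F}{u^{i,N}}$: the terms with $r<N$, and the cross terms $B_{j,r}$ with $j\neq i$, all give poles of order at most $N$.

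It then remains to compute that leading coefficient, which equals $-N!(u^{i,1})^N+b_N$, where $b_r$ denotes the coefficient of $(\ql-u^i)^{-(r+1)}$ in $B_{i,r}$. The recursion for $B_{i,r}$ gives
\[
b_r = r\,u^{i,1}\,b_{r-1}+e_r,\qquad b_1=-\tfrac12 u^{i,1},
\]
where $e_r$ is the top coefficient of $E_{i,r}$. Here one needs the singular behaviour of the periods at $\ql=u^i$: $\diff{p_\qa}{v^\rho}$ has a square-root branch singularity $\sim(u^i-\ql)^{-1/2}$, so the canonical gradient inside $E_{i,r}$ raises it to order $3/2$, each $\qp_x$ adds one, and the remaining $\diff{p_\qb}{v^1}$ factor adds $1/2$, the half-integers recombining into an honest pole of order $r+1$. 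Evaluating the coefficient with the defining relation of the periods as flat coordinates of the intersection form (Gram matrix $G^{\qa\qb}$) yields $e_r=-\frac{(2r-1)!!}{2^{r}}(u^{i,1})^{r}$, hence $-b_r=2^{-r}(u^{i,1})^rS_r$ with $S_r=2rS_{r-1}+(2r-1)!!$, $S_1=1$. A routine check shows $S_N=\sum_{l=1}^N\binom{N}{l}(2l-3)!!(2N-2l+1)!!$ solves this recursion; adding the $N!$ from the first building block produces $C_N$, and positivity is immediate since every summand is nonnegative.

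The hard part is the third paragraph: pinning down the leading singular behaviour of the periods near $\ql=u^i$ and proving that $E_{i,r}$ contributes at \emph{exactly} the top pole order with the clean coefficient $-\frac{(2r-1)!!}{2^r}(u^{i,1})^r$. Once this asymptotic input and the resulting recursion are secured, the pole count, the regularity at infinity, and the combinatorial verification of the closed form for $S_N$ are all routine.
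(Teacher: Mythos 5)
The paper does not actually prove Lemma \ref{BE}: it is imported verbatim as Lemma 4.2.2 of \cite{dubrovin2001normal}, so there is no internal proof to compare your attempt against. Judged on its own terms, your skeleton is the right one and is essentially the argument of \cite{dubrovin2001normal}: split $\mathcal D(\ql)$ into the part $\sum\qp_x^r\kk{1/(u^i-\ql)}\,\qp/\qp u^{i,r}$ and the $B_{i,r}$-part, count pole orders termwise, and track the two top contributions $-r!\,(u^{i,1})^r$ and $b_r$ that multiply $\qp F/\qp u^{i,N}$. Your bookkeeping is internally consistent: with $b_r=r\,u^{i,1}b_{r-1}+e_r$, $b_1=-\tfrac12 u^{i,1}$ and $e_r=-(2r-1)!!\,2^{-r}(u^{i,1})^r$, one gets $S_r=2rS_{r-1}+(2r-1)!!$, and the closed form does solve this recursion ($S_1=1$, $S_2=7$, $S_3=57$), reproducing $C_N=N!+2^{-N}S_N$.

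The genuine gap is exactly where you flag it, and it is not a small one: everything hinges on the local expansion of the periods at $\ql=u^i$, which you assert rather than establish. Concretely, you need (i) the precise normalization of the square-root singularity of $\qp p_\qa/\qp v^\qb$ at $\ql=u^i$, paired through the Gram matrix $G^{\qa\qb}$, to extract $e_r=-(2r-1)!!\,2^{-r}(u^{i,1})^r$; the derivative cascade $\qp_x^{r-1}(u^i-\ql)^{-3/2}$ only produces the factor $(2r-1)!!\,2^{-(r-1)}(u^{i,1})^{r-1}$, and the remaining factor $-\tfrac12 u^{i,1}$ is precisely the period normalization, which is the whole content of the claim. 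And (ii) you need to justify that at the other points $\ql=u^j$, $j\neq i$, the half-integer singularities of the two period factors inside $E_{i,r}$ recombine into a pole of order at most $r$ rather than $r+1$ (this uses that the inner factor carries the gradient of $u^i$, which does not raise the order of the branch at $u^j$); without this, even the stated bound $N+1$ on the pole order is not proved. Both points are the content of the period expansions preceding Lemma 4.2.2 in \cite{dubrovin2001normal} and must be either imported or proved. With that input supplied, the rest of your argument --- the pole count, the regularity at $\ql=\infty$, and the combinatorial identity for $S_N$ --- is complete and routine, as you say.
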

\begin{Prop}
\label{BF}
Let $M$ be a semisimple irreducible Frobenius manifold of dimension $n\geq 2$, and $F\in \mathcal S_d$ with $d\geq 1$ satisfy the identity
\[
\mathcal D(\ql)F = H
\]
for a certain $H\in\mathcal A_d$, then we have $F\in\mathcal A_d$. 
\end{Prop}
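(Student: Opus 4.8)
The plan is to argue by contradiction: I will show that a genuine denominator in $F$ survives the application of $\mathcal D(\ql)$ and therefore cannot be absorbed into the localization-free right-hand side. I work in canonical coordinates, where
\[
\mathcal S = C^\infty(u)[u_x,u_{xx},\dots]\bigl[(u^1_x)^{-1},\dots,(u^n_x)^{-1}\bigr],
\]
so every element is a Laurent polynomial in the first jets $u^i_x=u^{i,1}$ (negative powers allowed) and an ordinary polynomial in the higher jets, and $F\in\mathcal A_d$ means exactly that no $u^i_x$ occurs to a negative power. The two structural facts I will exploit both come from Lemma \ref{BD}. First, $\mathcal D(\ql)$ cannot lower the power of any $u^{i}_x$: the only term differentiating $u^{i_0,1}$ is $\partial_{u^{i_0,1}}$, whose coefficient $\mathcal D(\ql)u^{i_0,1}=u^{i_0,1}c_{i_0}(\ql)$ is itself proportional to $u^{i_0,1}$ (hence $u^{i_0,1}\partial_{u^{i_0,1}}$ acts as an Euler operator), while the coefficients of all other $\partial_{u^{i,r}}$ contain $u^{i_0,1}$ only to non-negative powers; here $c_{i}(\ql)=-\tfrac32(u^i-\ql)^{-2}-\sum_{j\neq i}\tfrac{\psi_{j1}}{\psi_{i1}}\qg_{ij}\bigl((u^j-\ql)^{-1}-(u^i-\ql)^{-1}\bigr)$. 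Second, $\mathcal D(\ql)$ is triangular with respect to $\prec$, acting on a jet-monomial $m$ by $\mathcal D(\ql)m=\Lambda_m(\ql)\,m+(\text{terms}\prec m)$, where $\Lambda_m(\ql)$ is rational in $\ql$ with a double pole $-\sum_r e_{i,r}(1+\tfrac r2)$ at each $\ql=u^i$ and simple poles at the $u^j$ whose coefficients are built from the rotation coefficients $\qg_{ij}$.

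Suppose $F\in\mathcal S_d\setminus\mathcal A_d$, so some $u^{i_0}_x$ occurs to a negative power; let $-p$ (with $p\geq 1$) be the lowest such power and $F_{-p}\neq 0$ its coefficient, which is independent of $u^{i_0}_x$. Because $\mathcal D(\ql)$ preserves the lowest $u^{i_0}_x$-power while $H$ has no negative power of $u^{i_0}_x$, the coefficient of $(u^{i_0}_x)^{-p}$ in $\mathcal D(\ql)F$ must vanish. Separating the Euler contribution $-p\,c_{i_0}(\ql)F_{-p}$ from the remainder gives the homogeneous eigen-equation
\[
\mathcal D_0(\ql)F_{-p}=p\,c_{i_0}(\ql)\,F_{-p},\qquad \mathcal D_0(\ql):=\mathcal D(\ql)\big|_{u^{i_0}_x=0},
\]
where $\mathcal D_0(\ql)$ is again a derivation. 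The decisive feature is that the eigenvalue $p\,c_{i_0}(\ql)$ has a double pole at $\ql=u^{i_0}$ with nonzero coefficient $-\tfrac{3p}{2}$.

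The heart of the argument is to force $F_{-p}=0$ from this equation. In the easy case, where $F_{-p}$ involves no positive-order jet $u^{i_0,r}$, the left-hand side has at most simple poles at $\ql=u^{i_0}$ — these come only from the zeroth-jet term $(u^{i_0}-\ql)^{-1}\partial_{u^{i_0}}$ and from the $\qg$-couplings in $\mathcal D_0(\ql)u^{j,r}$ for $j\neq i_0$ — whereas the right-hand side has the double pole $-\tfrac{3p}{2}F_{-p}$; comparing double poles yields $F_{-p}=0$. In general I would take the $\prec$-leading monomial $m^*$ of $F_{-p}$ and use triangularity to obtain the identity $\Lambda_{m^*}(\ql)=p\,c_{i_0}(\ql)$ of rational functions. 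Matching double poles pins down the $u^i$-jet-degree data of $m^*$ (value $\tfrac{3p}{2}$ at $u^{i_0}$ and $0$ at each $u^j$, $j\neq i_0$); matching the simple poles (residues) at the punctures $u^j$ with $\qg_{i_0 j}\neq 0$, which exist by irreducibility and $n\geq 2$, then over-determines $m^*$. The double- and simple-pole conditions are jointly inconsistent unless the relevant rotation coefficients vanish, contradicting irreducibility, so $F_{-p}=0$ and hence $F\in\mathcal A_d$.

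The main obstacle is exactly the general case: there are \emph{resonant} coefficients, for instance $F_{-p}=(u^{i_0,2})^3$ with $p=4$, whose eigenvalue under $\mathcal D_0(\ql)$ reproduces the double-pole coefficient $-\tfrac{3p}{2}$ at $u^{i_0}$, so the double-pole comparison alone is powerless. Ruling these out genuinely requires the finer residue data together with the global coupling supplied by irreducibility — this is the only place the hypotheses $n\geq 2$ and irreducibility enter. Two technical points then demand care: $F_{-p}$ is a sum of monomials, so the $\prec$-lower feedback and the possibility of simultaneous localization in other directions $u^j_x$ call for an inner induction (on $\prec$, or on the number of localized directions); and the $u^{i_0}$-jet order of $F_{-p}$ must be bounded, for which the pole-order estimate of Lemma \ref{BE} is the natural tool.
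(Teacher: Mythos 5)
Your setup is sound as far as it goes: extracting the lowest power $(u^{i_0,1})^{-p}$, observing that $u^{i_0,1}\qp_{u^{i_0,1}}$ acts as an Euler operator (Lemma \ref{BD} with $k=1$, $R_{i,1}=0$) while all other coefficients of $\mathcal D(\ql)$ are polynomial in the jets, and arriving at the eigen-equation $\mathcal D_0(\ql)F_{-p}=p\,c_{i_0}(\ql)F_{-p}$ is correct. But the proof stops exactly where the real work begins, and you say so yourself: the resonant monomials defeat the double-pole comparison, and your resolution --- ``matching the simple poles \dots then over-determines $m^*$; the double- and simple-pole conditions are jointly inconsistent'' --- is asserted, not computed. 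Two concrete things are missing. First, the quantitative identity that actually produces the contradiction: for a Laurent monomial $m=\prod_k(u^{i,k})^{e_k}$ in the jets of a \emph{single} canonical coordinate $u^i$, the residue at $\ql=u^j$ of its diagonal coefficient under $\mathcal D(\ql)$ is $\bigl(\sum_k k\,e_k\bigr)\frac{\psi_{j1}}{\psi_{i1}}\qg_{ij}$, i.e.\ it is proportional to the \emph{differential degree} of the monomial. For the offending term this degree is exactly $d$, so the obstruction is $d\cdot\frac{\psi_{j1}}{\psi_{i1}}\qg_{ij}$ with $d\geq 1$ and $\qg_{ij}\neq 0$ for some $j$ by irreducibility. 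Nowhere in your argument does the hypothesis $d\geq 1$ enter, yet it is indispensable: for $d=0$ the statement is false (e.g.\ constants, or degree-zero rational expressions in the first jets, can be annihilated), so any proof that never invokes $d\geq 1$ cannot be complete. Second, the residue computation only closes once you know the leading non-polynomial monomial involves jets of a single canonical coordinate; your double-pole matching does not give this, precisely because $e_{j,1}$ may be negative (simultaneous localization in several directions), which you flag as an unresolved ``technical point.''

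For comparison, the paper's proof avoids the eigen-equation altogether and instead inducts on the highest jet order $N$ with $F\in\mathcal S^{(N)}$. Lemma \ref{BE} identifies the coefficient of $(\ql-u^i)^{-N-1}$ in $\mathcal D(\ql)F$ as $-C_N(u^{i,1})^N\diff{F}{u^{i,N}}$; requiring this to be polynomial forces every top-jet monomial mixing two indices to have a polynomial coefficient, so the non-polynomial part collapses to a sum of single-index terms $V^i_r(u^{i,N})^{k}/(u^{i,1})^{l}$. A second pole comparison (order $s+1$ at $\ql=u^j$) shows $V^i_r$ depends only on $u^i$-jets. Only then is Lemma \ref{BD} applied to the $\prec$-maximal such term to read off the coefficient of $(\ql-u^j)^{-1}$ times that monomial, namely $G^i\,(2k_2+\cdots+Nk_N-l_1)\frac{\psi_{j1}}{\psi_{i1}}\qg_{ij}=G^i\, d\,\frac{\psi_{j1}}{\psi_{i1}}\qg_{ij}$, giving the contradiction and allowing descent from $N$ to $N-1$. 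If you want to salvage your route, you must (i) prove the single-coordinate reduction for $F_{-p}$ (the paper's $(\ql-u^j)^{-s-1}$ argument adapts), and (ii) replace ``over-determines'' by the explicit residue identity above, where $d\geq 1$ and irreducibility do the killing. As it stands the proposal is a correct strategy with the decisive step missing.
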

\begin{proof}
Let us assume that $F\in\mathcal S^{(N)}$, $H\in\mathcal A^{(N)}$ for a certain positive integer $N$, here $\mathcal A^{(N)} = \mathcal A\cap \mathcal S^{(N)}$.
Then we can represent $F$ in the following form:
\[
F = \sum_{(k_1,\cdots,k_n)\neq 0}W_{k_1,\cdots,k_n}(u^{1,N})^{k_1}\cdots(u^{n,N})^{k_n}+R,\quad W_{k_1,\cdots,k_n},R\in\mathcal S^{(N-1)}. 
\]
By using Lemma \ref{BE} we know that the coefficients of $(\ql-u^i)^{-N-1}$ in $\mathcal D(\ql)F$ are given by
\begin{equation}
\label{BA}
\sum_{(k_1,\cdots,k_n)\neq 0}-k_iC_N(u^{i,1})^NW_{k_1,\cdots,k_n}(u^{1,N})^{k_1}\cdots(u^{i,N})^{k_i-1}\cdots(u^{n,N})^{k_n}.
\end{equation}
If $N=1$, then for each term $W_{k_1,\cdots,k_n}(u)(u^{1,1})^{k_1}\cdots(u^{n,1})^{k_n}$ with $k_i\neq 0$ for some $i$, we must have $k_i>0$ due to the assumption that $\mathcal D(\ql)F$ is a differential polynomial. Therefore we conclude that  $F\in\mathcal A$, hence in what follows we assume $N\geq 2$. 

It follows from \eqref{BA} that 
\[k_iC_N(u^{i,1})^NW_{k_1,\cdots,k_n}\in\mathcal A,\quad i=1,\cdots, n.\] 
So it is easy to see that for a term $W_{k_1,\cdots,k_n}(u^{1,N})^{k_1}\cdots(u^{n,N})^{k_n}$, if there exists $i\neq j$ such that $k_i\neq 0$ and $k_j\neq 0$, then we must have $W_{k_1,\cdots,k_n}\in\mathcal A$. In another word, we can uniquely represent $F$ in the following form:
\[
F = R+P+\sum_{i=1}^n\sum_{r}V^i_r\frac{(u^{i,N})^{k_{i,r}}}{(u^{i,1})^{l_{i,r}}},\quad  R\in\mathcal S^{(N-1)}, P\in\mathcal A^{(N)},V^i_r\in\mathcal A^{(N-1)},
\]
here $V^i_r$ do not depend on $u^{i,1}$ and $l_{i,r},k_{i,r}>0$. Let us proceed to show that
\[
\diff{V^i_r}{u^{j,s}} = 0,\quad j\neq i,\ s\geq 0.
\]
For each $j\neq i$, we denote by $s$ the maximal integer such that $\diff{V^i_r}{u^{j,s}} \neq 0$. Let us consider the coefficients of
\begin{equation}
\label{BB}
\frac{1}{(\ql-u^j)^{s+1}}\frac{(u^{i,N})^{k_{i,r}}}{(u^{i,1})^{l_{i,r}}}
\end{equation}
in the expression of $\mathcal D(\ql)F$. It is clear that both $R$ and $P$ give no contributions to this term and we see, by using Lemma \ref{BE} again, that the coefficient of \eqref{BB} is given by
\[
-\diff{V^i_r}{u^{j,s}}C_s(u^{j,1})^s.
\]
This contradicts the fact that $\mathcal D(\ql)F$ is a differential polynomial and therefore $\diff{V^i_r}{u^{j,s}} = 0$ for any $j\neq i$ and $s\geq 0$. So we can uniquely represent $F$ in the form
\[
F = R+P+\sum_{i=1}^n\sum_{l_1;k_2,\cdots,k_N} G^i_{k_1,\cdots,k_N}(u^i)\frac{(u^{i,2})^{k_2}\cdots(u^{i,N})^{k_N}}{(u^{i,1})^{l_{1}}},\quad l_1,k_N> 0.
\]
For each index $i$, let us pick up a nonzero term
\[
G^i_{k_1,\cdots,k_N}(u^i)\frac{(u^{i,2})^{k_2}\cdots(u^{i,N})^{k_N}}{(u^{i,1})^{l_{1}}}
\]
such that $l_1$ is maximal and its numerator is maximal with respect to $\prec$. Then we can apply Lemma \ref{BD} and compute the coefficients of
\[
\frac{1}{\ql-u^j}\frac{(u^{i,2})^{k_2}\cdots(u^{i,N})^{k_N}}{(u^{i,1})^{l_{1}}},\quad j\neq i
\]
in the expression of $\mathcal D(\ql)F$, which are given by
\[
G^i_{k_1,\cdots,k_N}(2k_2+3k_3\cdots+Nk_N-l_1)\frac{\psi_{j1}}{\psi_{i1}}\qg_{ij},\quad j\neq i.
\]
Due to the assumption that $F\in\mathcal S_d$, we see that $2k_2+\cdots+Nk_N-l_1 = d\geq 1$ and it follows from the irreducibility of $M$ that we can choose $j\neq i$ such that $\qg_{ij}\neq 0$. Therefore by using $\mathcal D(\ql)F\in\mathcal A$, we conclude that $G^i_{k_1,\cdots,k_N} = 0$, which contradicts to our choice. Hence we conclude that $F$ must have the form
\[
F = R+P,\quad R\in\mathcal S^{(N-1)},P\in\mathcal A.
\]
We then arrive at
\[
\mathcal D(\ql)(R) = H-\mathcal D(\ql)(P)\in\mathcal A,\quad R\in\mathcal S^{(N-1)}.
\]
Then we can finish the proof of the proposition by doing induction on $N$.
\end{proof}

Now we are ready to prove the main theorem.
\begin{proof}[\textbf{Proof of the Main Theorem}] The `only if' part is proved by combining the results of \cite{dubrovin1998bihamiltonian,dubrovin2001normal,ICCM2007}. In fact, due to the linearization of the Virasoro symmetries, it follows from Theorem \ref{AL} that the quasi-Miura transformation \eqref{AQ} must be given by
\[
Q_g^\qa = \eta^{\qa\qb}\frac{\qp^2F_g}{\qp t^{\qb,0}\qp t^{1,0}},\quad g\geq 1,
\]
where $F_g$ are described as in Theorem \ref{AL}. Thus the explicit formula of the genus one component of the deformed bihamiltonian structure is given by \cite{dubrovin1998bihamiltonian} and its central invariants are all equal to $\frac{1}{24}$ \cite{ICCM2007}.

Let us turn to prove the `if' part. If the dimension $n$ of the semisimple Frobenius manifold $M$ is equal to $1$, then it is well known that the corresponding Principal Hierarchy is the dispersionless KdV hierarchy. It admits a deformation given by the deformed bihamiltonian structure
\begin{align*}
&\{w(x),w(y)\}_1 = \qd'(x-y),\\ &\{w(x),w(y)\}_2 = w(x)\qd'(x-y)+\frac 12w'(x)\qd(x-y)+\frac{\qe^2}{8}\qd'''(x-y).
\end{align*}
This particular deformation determines an integrable hierarchy with linearized Virasoro symmetries \cite{dubrovin2001normal,kontsevich1992intersection,witten1990two}. Hence in what follows we assume the dimension $n$ of the semisimple Frobenius manifold $M$ satisfies $n\geq 2$ and without loss of generality, we assume $M$ is irreducible.

Recall that the central invariants completely determine the equivalent class of  deformations of a semisimple bihamiltonian structure of hydrodynamic type under Miura type transformation. Therefore if the central invariants of the deformed bihamiltonian structure are all equal to $\frac{1}{24}$, then there exists a Miura type transformation such that, after applying this transformation, the genus one component of the deformed bihamiltonian structure coincides with the one given in \cite{dubrovin1998bihamiltonian}. This Miura type transformation cancels $O_{m,1}$ due to the result of \cite{dubrovin2001normal,dubrovin1999frobenius}. Now let us do indunction and assume that we have canceled $O_{m,k}$ for all $m\geq -1$ and $k\leq g$ for a certain $g\geq 1$ in the representations of the Virasoro symmetries \eqref{AU}. Then we are going to solve $G_{g+1}$ such that
\begin{equation}
\label{AZ}
\mathcal D(\ql)G_{g+1} = O_{g+1}(\ql),\quad O_{g+1}(\ql) = \sum_{\ql\geq -1}\frac{1}{\ql^{m+2}}O_{m;g+1}.
\end{equation}
It follows from Proposition \ref{AR} that the quasi-Miura transformation of the deformed bihamiltonian structure can be represented by
\[
v^\qa\mapsto w^\qa = v^\qa+\qe^2 Q^\qa_1+\sum_{k\geq 2}\qe^{2k}\eta^{\qa\qb}\frac{\qp^2}{\qp t^{\qb,0}\qp t^{1,0}}T_k\bigl(v,v_x,\cdots,v^{(3g-2)}\bigr),\quad T_k\in \mathcal S_{2k-2}^{(3k-2)}.
\]
Due to the induction hypothesis and the uniqueness of the quasi-Miura transformation, we see that 
\[
Q_1^\qa = \eta^{\qa\qb}\frac{\qp^2  F_1^{DZ}}{\qp t^{\qb,0}\qp t^{1,0}},\quad T_k =  F_k^{DZ},\quad k\leq g,
\]
here we denote by $ F^{DZ}_k$ the solution of the loop equation \eqref{AI} given by Theorem \ref{AL}. 

On the other hand, from the tau-structure \eqref{AY} of the deformed integrable hierarchy it follows that the genus expansion of the free energy $\mathcal F = \qe^{-2}\mathcal F_0+\sum_{k\geq 1}\qe^{2k-2} F_k$ satisfies the identities
\[
 F_1 =  F_1^{DZ},\quad  F_k = T_k,\quad k\geq 2.
\]
Similar to the the derivation of the loop equation given in \cite{dubrovin2001normal}, we obtain from the expressions of the  Virasoro symmetries
\[
\diff{Z}{s_m} = L_mZ+O_mZ
\] 
and the genus zero Virasoro symmetries \eqref{AC} that the function $ F_{g+1}$ satisfies the equation
\begin{align*}
\mathcal D(\ql) F_{g+1}
=&\,\frac{1}{2}\kk{\sum_{k=1}^{g-1}\diff{ F_k}{v^{\qg,k}}\diff{ F_{g-k}}{v^{\rho,l}}+\frac{\qp^2 F_{g}}{\qp v^{\qg,k}\qp v^{\rho,l}}}\qp_x^{k+1}\qp^\qg p_\qa G^{\qa\qb}\qp_x^{l+1}\qp^\rho p_\qb\\
&+\frac 12\sum\diff{ F_g}{v^{\qg,k}}\qp_x^{k+1}\left[\nabla\diff{p_\qa}{\ql}\cdot \nabla\diff{p_\qb}{\ql}\cdot v_x\right]^\qg G^{\qa\qb}-O_{g+1}(\ql).
\end{align*}
Recall that the loop equation \eqref{AI} satisfied by $ F_{g+1}^{DZ}$ is given by
\begin{align*}
\mathcal D(\ql) F_{g+1}^{DZ}
=&\,\frac{1}{2}\kk{\sum_{k=1}^{g-1}\diff{ F_k^{DZ}}{v^{\qg,k}}\diff{ F_{g-k}^{DZ}}{v^{\rho,l}}+\frac{\qp^2 F_{g}^{DZ}}{\qp v^{\qg,k}\qp v^{\rho,l}}}\qp_x^{k+1}\qp^\qg p_\qa G^{\qa\qb}\qp_x^{l+1}\qp^\rho p_\qb\\
&+\frac 12\sum\diff{ F_g^{DZ}}{v^{\qg,k}}\qp_x^{k+1}\left[\nabla\diff{p_\qa}{\ql}\cdot \nabla\diff{p_\qb}{\ql}\cdot v_x\right]^\qg G^{\qa\qb}.
\end{align*}
Therefore it follows from the induction hypothesis that equation \eqref{AZ} has a solution
\[G_{g+1} =  F_{g+1}^{DZ}- F_{g+1}\in\mathcal S_{2g}\] 
such that
\[
\mathcal D(\ql)G_{g+1} = O_{g+1}.
\]
Note that $O_{g+1}(\ql)\in\mathcal A_{2g}$ and $g\geq 1$, then by applying Proposition \ref{BF} we see that $G_{g+1}$ is also a differential polynomial. The theorem is proved.
\end{proof}

\section{Conclusion}\label{conc}


In \cite{dubrovin2001normal}, a program for classifying the integrable hierarchies of topological type was proposed, the main goal of which is to describe a 2DTFT completely via an integrable hierarchy. These integrable hierarchies should satisfy the following four axioms:
\begin{description}
\item[Axiom BH] The integrable hierarchy possesses a bihamiltonian structure with dispsersionless limit of hydrodynamic type.

\item[Axiom QT] The integrable hierarchy is quasi-trivial, meaning that there is a quasi-Miura transformation that changes the integrable hierarchy to its dispersionless limit.

\item[Axiom TS]  The integrable hierarchy possesses a tau structure.

\item[Axiom LVS] The integrable hierarchy possesses Virasoro symmetries which are given by linear actions of Virasoro operators on the tau function.
\end{description}
Such an integrable hierarchy is called of topological type. The central problem regarding this classification program   is the existence and uniqueness of such kinds of integrable hierarchies. In \cite{dubrovin2001normal}, it is proved that starting from the axiom BH and axiom TS, the dispersionless limit of such an integrable hierarchy corresponds to a Frobenius manifold or a degenerate one. On the other hand, for a given caliberated semisimple Frobenius manifold, there exists a unique deformation of its Principal Hierarchy satisfies all four axioms, which is just the DZ hierarchy as it is constructed in \cite{dubrovin2001normal}. The remaining work to be finished is to prove that the bihamiltonian structure of DZ hierarchy can be represented in terms of differential polynomials. In \cite{buryak2012deformations,buryak2012polynomial}, Buryak, Posthuma and Shadrin proved that the first Hamiltonian structure of the DZ hierarchy can be represented by differential polynomials by considering the infinitesimal transformation of Givental's group action \cite{givental2001gromov,lee2009invariance}. In the present paper we prove the polynomiality property of the bihamiltonian structure of the DZ hierarchy by using the super tau-covers of integrable hierarchies\cite{liu2020super} and the associated variational bihamiltonian cohomologies\cite{liu2021variational,variationalII}. 

Our strategy is to study the relations among these four axioms. The relations among the first three were studied in the papers \cite{DLZ-1,dubrovin2018bihamiltonian,falqui2012exact,liu2005deformations}. Under the semisimplicity condition, axiom QT can be derived from axiom BH, and axiom TS can also be derived form axiom BH by requiring that the bihamiltonian structure is flat exact\cite{dubrovin2018bihamiltonian}. The present paper together with the work \cite{variationalII,liu2020super,liu2021variational} gives a clear description of the relation between axiom BH and axiom LVS, that is, the axiom LVS can also be derived from axiom BH by requiring that the central invariants of the flat exact bihamiltonian structure are all equal to $\frac{1}{24}$. To conclude, we have the following theorem.
\begin{Th}
For a given semisimple Frobenius manifold, a bihamiltonian deformation of its Principal Hierarchy is equivalent to the DZ hierarchy under Miura type transformations if and only if the central invariants of the deformed bihamiltonian structure are all equal to $\frac{1}{24}$.
\end{Th}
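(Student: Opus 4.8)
The plan is to obtain this theorem as a synthesis of the Main Theorem with the classification of integrable hierarchies of topological type from \cite{dubrovin2001normal}; essentially all of the analytic difficulty has already been absorbed into the Main Theorem and Theorem \ref{AT}, so what remains is to assemble the logical equivalences correctly. First I would settle the \emph{only if} direction, which does not even require the Main Theorem. The central invariants form a complete set of invariants for the equivalence classes of deformations of a semisimple bihamiltonian structure of hydrodynamic type under Miura type transformations \cite{dubrovin2018bihamiltonian,falqui2012exact,liu2005deformations}; in particular they are Miura-invariant. Hence if a deformation is Miura-equivalent to the DZ hierarchy, its central invariants must agree with those of the DZ hierarchy, and the latter are all equal to $\frac{1}{24}$ by the computation of \cite{ICCM2007}.

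For the \emph{if} direction, suppose that all central invariants equal $\frac{1}{24}$. Axiom BH holds by hypothesis; by semisimplicity the deformation is quasi-trivial, giving Axiom QT \cite{DLZ-1}, and by tau-symmetry it carries a tau structure, giving Axiom TS \cite{dubrovin2018bihamiltonian}. The Main Theorem then furnishes a Miura type transformation after which the Virasoro symmetries act linearly on the tau-function, $\diff{Z}{s_m}=L_mZ$ for $m\geq -1$, which is precisely Axiom LVS. Thus, after this transformation, the hierarchy satisfies all four axioms, and by the uniqueness part of the classification in \cite{dubrovin2001normal} it must coincide with the DZ hierarchy associated with $M$. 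Undoing the transformation shows that the original deformation is Miura-equivalent to the DZ hierarchy, which closes the equivalence.

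The main obstacle is conceptual rather than computational: one must verify that the hierarchy obtained after linearization genuinely satisfies the hypotheses of the uniqueness statement of \cite{dubrovin2001normal}. The two points requiring care are that the tau structure survives the linearizing Miura transformation in the normalized form used there, and that the linear operators $L_m$ produced by the Main Theorem are identically the Virasoro operators of \cite{dubrovin1999frobenius} entering Axiom LVS, so that no residual ambiguity is left before uniqueness is invoked. Once these identifications are in place, the theorem follows immediately as a direct synthesis of the Main Theorem and the classification results.
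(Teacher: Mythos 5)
Your proposal is correct and follows essentially the same route as the paper, which presents this theorem as a direct synthesis of the Main Theorem (giving Axiom LVS from the $\frac{1}{24}$ condition after a Miura type transformation), the facts that Axioms QT and TS follow from Axiom BH under semisimplicity and flat exactness, the uniqueness of the hierarchy satisfying all four axioms from \cite{dubrovin2001normal}, and, for the converse, the Miura-invariance of the central invariants together with the computation of \cite{dubrovin1998bihamiltonian,ICCM2007} showing that those of the DZ hierarchy all equal $\frac{1}{24}$. The two points of care you flag at the end are indeed the right ones, and they are settled by Theorem \ref{AT} (which fixes the operators $L_m$ to be those of \cite{dubrovin1999frobenius}) and by the transformation law \eqref{AV} for the tau-structure.
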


A possible application of this theorem is to study the so-called DR/DZ equivalence conjecture in the semisimple setting. In \cite{buryak2015double}, Buryak constructed an integrable hierarchy, called the double ramification (DR) hierarchy, starting from any cohomological field theory (CohFT). He showed that the genus zero part of the DR hierarchy coincides with the Principal Hierarchy of the Frobenius manifold given by the genus zero part of the CohFT. It was conjectured that the DR hierarchy is equivalent to the DZ hierarchy for any semisimple CohFT under Miura type transformations, and this is called the DR/DZ equivalence conjecture. Although the conjecture is still open, there have been many evidences and verifications for this conjecture. In the paper \cite{buryak2021towards}, a conjectural formula was given for the bihamiltonian structure of the DR hierarchy and under the assumption of the correctness of their conjectural formula, the central invariants were computed and they are all equal to $\frac{1}{24}$. In the papers \cite{ brauer2021bihamiltonian,buryak2018tau}, the DR/DZ equivalence conjecture was proved up to genus one approximation and in particular the conjectural formula for the bihamiltonian structure of the DR hierarchy in \cite{buryak2021towards} is correct up to genus one. Therefore by applying the above theorem, the DR/DZ equivalence conjecture can be proved in the semisimple setting by confirming the correctness of the conjectural formula for the bihamiltonian structure of the DR hierarchy.


\medskip

\noindent Si-Qi Liu,

\noindent Department of Mathematical Sciences, Tsinghua University \\ 
Beijing 100084, P.R.~China\\
liusq@tsinghua.edu.cn
\medskip

\noindent Zhe Wang,

\noindent Department of Mathematical Sciences, Tsinghua University \\ 
Beijing 100084, P.R.~China\\
zhe-wang17@mails.tsinghua.edu.cn
\medskip

\noindent Youjin Zhang,

\noindent Department of Mathematical Sciences, Tsinghua University \\ 
Beijing 100084, P.R.~China\\
youjin@tsinghua.edu.cn

\end{document}